\newtheorem{theorem}{Theorem}[section]
\definecolor{mColor1}{rgb}{0.9,0.9,0.9}  
\definecolor{mColor2}{gray}{0.8}
\definecolor{mColor3}{gray}{1.0}
\newtheorem{prop}[theorem]{Proposition}
\def\Bbb{\mathbb} \def\BZ{\Bbb Z}  \def\BC{\Bbb C}
\newcommand{\be}{\begin{equation}}
\newcommand{\ee}{\end{equation}}
\newcommand{\bea}{\begin{eqnarray}}
\newcommand{\eea}{\end{eqnarray}}
\newcommand{\hl}[1]{{\color{black} #1}}
\begin{document}

\begin{titlepage}
\begin{flushright}
 (v. 2) March 2026\\
\texttt{arXiv:2510.24248}
\end{flushright}
\begin{center}
\textsf{\large Quasi-Characters for three-character Rational Conformal Field Theories}\\[12pt]
Suresh Govindarajan$^{\dagger,1}$, Akhila Sadanandan$^{\ddag,1}$ and Jagannath Santara$^{*,2}$ \\[3pt]
Email: $^\dagger$suresh@physics.iitm.ac.in,
$^\ddag$akhila@physics.iitm.ac.in \\ $^*$jagannath.santara@iitgn.ac.in\\[6pt]
${}^1$Department of Physics,\\
Indian Institute of Technology Madras\\
Chennai 600036, India \\[6pt]
${}^2$Department of Physics, \\
Indian Institute of Technology, Gandhinagar,\\ Gujarat 382055, India.
\end{center}
\begin{abstract}
We revisit $(3,0)$ and $(3,3)$ admissible solutions obtained using the MLDE method. We show that  all $(3,0)$  solutions can be written in terms of a universal formula involving the ${}_3F_2$ hypergeometric function that takes into account the monodromy at the elliptic points. We construct $(3,3)$ admissible solutions from $(3,0)$ CFTs using a duality due to Bantay and Gannon. This enables us to compute their modular properties such as the S-matrix and the fusion rules. We find that only 7 of the 15 known $(3,3)$ admissible solutions have proper fusion rules. 

Using the theory of matrix MLDE, starting with a known $(3,0)$ and $(3,3)$ solutions, we construct two other solutions, that are typically quasi-characters that share the same multiplier as the original solution. We then construct linear combinations that lead to new admissible solutions. We observe that admissible solutions arise as integer points that lie on a polytope. We construct all possible $(3,6)$ and $(3,9)$ admissible solutions that arise in this fashion. In some cases, we identify RCFT that arise from our $(3,6)$ admissible solutions. In addition, we obtain a large family of admissible solutions with higher Wronskian index. 
\end{abstract}
\end{titlepage}
\tableofcontents

\section{Introduction}

The holomorphic modular bootstrap aims to classify two-dimensional Rational Conformal Field Theories (RCFTs) -- these are two-dimensional CFTs with a finite number of primaries\cite{Anderson:1987ge}. It can happen that the distinct primaries have the same character. Thus the number of characters of an RCFT (which we denote by $n$) can be less than the number of primaries\footnote{ \hl{For instance, in the case of WZW models, this can happen due to an outer automorphism of the associated Dynkin diagram. In the case of $SO(8)_1$, one has a triality that relates the vector and two spinor characters. This theory has four primaries but only two distinct characters.}}  This approach began with the work of Mathur, Mukhi and Sen (MMS) in the late 80s where they mapped this problem to one of solving Modular Linear Differential Equations(MLDE)\cite{Mathur:1988na,Mathur:1988gt,Naculich:1988xv}. A second parameter, the Wronskian index  (which we denote by $\ell$) \hl{captures the pole structure of the MLDE.} The difficulty of this classification increases as the number of characters as well as the Wronskian index increases. However, when the number of characters, Wronskian index as well as central charge are all small, this problem is somewhat tractable. 

Not unexpectedly, the case of one and two character theories has seen the most progress. For one-character  \hl{bosonic RCFTs}, the central charge is necessarily a multiple of $8$. \hl{This is not true for fermionic RCFTs\cite{Bae:2020xzl}.} There is a unique RCFT at $c=8$, the $E_8$ Kac-Moody Lie algebra at level one, two at $c=16$ and $71$ at $c=24$\cite{Schellekens:1992db}. The numbers increase tremendously when $c>24$\cite{King:2003}. For two-character theories, the Wronskian index is even and non-negative\cite{Naculich:1988xv}. The ones with $\ell=0,2,4$ have been completely classified. More recently, the ones with two primaries, $\ell\geq6$ and $c<25$ have also been classified\cite{Hampapura:2015cea,Mukhi:2022bte}.

The MLDE approach has been used to classify three-character theories. In particular, admissible characters with Wronskian index $\ell=0,3$ have been constructed\cite{Das:2021uvd,Kaidi:2021ent,Bae:2021mej,Gowdigere:2023xnm}. For central charges $c\leq 24$, the unitary RCFTs with vanishing Wronskian index have also been determined\cite{Das:2022uoe}\hl{(see also \cite{Duan:2022ltz})}. However, when the Wronskian index is 3, it is not known whether the admissible characters that have been found are those of RCFTs. 

The method of quasi-characters has been used to construct new admissible characters with Wronskian index $\ell\geq6$\cite{Chandra:2018pjq,Mukhi:2020gnj,Das:2025gto}. Quasi-characters are also solutions to an MLDE for which the $q$-series has integral coefficients that are not necessarily non-negative. For suitable choices of coefficients, adding these quasi-characters to known admissible characters that share the same multiplier system \hl{(i.e., identical $S$ and $T$ matrices)} can lead to admissible solutions, albeit with higher Wronskian index. For instance, linear combinations to two  solutions of $(2,0)$ MLDE's can lead to an admissible character with  Wronskian index $6$. 
  
In another approach, Kaidi et al. use the representation theory of $PSL(2,\mathbb{Z}_N)$ for low values of $N$ and determine possible exponents (modulo one) of all theories with up to 5 characters\cite{Mathur:1989pk,Kaidi:2021ent}. This can be used as an input to the MLDE approach. In particular, the MLDE is uniquely determined by the exponents when the Wronskian index vanishes and the number of  characters is less than six. This needs one to fix the modulo one ambiguity in the Kaidi et al. list which leads to a plethora of choices of exponents even after fixing the Wronskian index.

Given a known RCFT, one can ask whether one can construct new characters from its characters. One such method uses Hecke operators, slightly generalized to act on vector valued modular forms (VVMFs), on the characters of the given RCFT\cite{Harvey:2018rdc,Harvey:2019qzs}. This method has been utilised to construct new admissible characters for a large class of examples with up to six characters\cite{Duan:2022ltz}. This method has the advantage that we know the modular data of the new characters.

Gaberdiel, Hampapura, Mukhi (GHM) propose a coset-like construction that leads to potentially new RCFTs\cite{Gaberdiel:2016zke}. Let
$\chi_i(\tau)$ ($i=0,\ldots,(n-1)$) be the characters of a known RCFT (with central charge $c$ less than 24\footnote{\hl{This constraint on central charge is a practical one as all meromorphic RCFTs up to $c=24$ have been classified\cite{Schellekens:1992db}.}}). Let $m_i$ denote the multiplicities of the characters in the RCFT. Then, the diagonal torus partition function is given by (with $m_0=1$)
\begin{equation}
Z = \sum_{i=0}^{n-1} m_i \ |\chi_i(\tau)|^2\ .
\end{equation}
Let the GHM coset dual have character $\widetilde{\chi}_i(\tau)$. Form the vector consisting  of the characters of given RCFT and its GHM dual. Call them, $\mathbb{X}$ and $\widetilde{\mathbb{X}}$, respectively. Further, let
$M=\text{Diag}(m_0,\ldots,m_{n-1})$ be the matrix of multiplicities. Then, one has the relation
\begin{equation}
\widetilde{\mathbb{X}}^T \cdot M \cdot \mathbb{X} = (J(\tau)+\mathcal{N})\ M\ .
\end{equation}
Here $\mathcal{N}$ is a constant and the coset dual has central charge $(24-c)$. One then `solves' this equation to obtain the $q$-series for $\widetilde{\mathbb{X}}$.

The last approach that is pursued in this paper is to use the theory of vector valued modular forms developed by Bantay and Gannon\cite{Bantay:2007zz,Gannon:2013jua}. The main idea is to start with a known RCFT with VVMF $\mathbb{X}$ and construct a basis of VVMFs with the same multiplier as the known RCFT. Taking linear combinations of the basis with coefficients that are polynomials in $J$ leads to new solutions\cite{Govindarajan:2025rgh}(see also \cite[see appendix C]{Rayhaun:2023pgc}). This is in the spirit of quasi-characters but leads to \textit{all} possible admissible characters that share the same multiplier as the known RCFT. This paper pursues this approach to systematically construct new admissible solutions  with higher Wronskian index starting from $(3,0)$ and $(3,3)$ theories. 

\noindent The main results of this paper are as follows.
\begin{enumerate}
\item We provide a uniform expression for all $(3,0)$ characters and quasi-characters in terms of the ${}_3F_2$
hypergeometric function. While this is not necessarily a new result, it incorporates the monodromy data from the MLDE in a natural way. Proposition \ref{PropHyp} summarises the result.
\item Given a $(3,0)$ RCFT  with VVMF that we denote by $\mathbb{X}$, we construct two additional solutions that we call $\mathbb{Y}_1$ and $\mathbb{Y}_2$ that provide a basis for VVMF with this multiplier. Given the exponents of $\mathbb{X}$ and two \textit{degeneracies}, $y_{21}$ and $y_{31}$, we give an explicit formula, Eq. \eqref{30Xi}, for the solution to the matrix MLDE, Eq. \eqref{MMLDE}, that appears in the theory of Bantay and Gannon.  Multiple tables present explicit results for all known $(3,0)$ RCFTs.
\item Starting with known $(3,0)$ RCFTs with central charge $c<24$, we construct all possible $(3,6)$ admissible characters. Admissible characters appear as integral points inside a convex polytope, that can be non-compact with $(3,6)$ theories appearing at co-dimension one facets and occasionally $(3,0)$ theories appears at corners (co-dimension ridges) of the polytope. Interior points correspond to higher Wronskian index. We show that some of the $(3.6)$ admissible characters are those of an RCFT. Our study is not comprehensive but done only for a  few cases.
\item Bantay and Gannon propose a duality that maps one basis of VVMF with a given multiplier to the basis of a VVMFs with a multiplier derived from the original one. As an application, we show that this duality maps $(3,0)$ theories to $(3,3)$ theories. We obtain all the 15 known solutions that were obtained using the MLDE method in this fashion. This enables us to obtain the S-matrices for the 15 solutions and we find only 7 of them have fusion coefficients that are 0 or 1.
\item Like for the $(3,0)$ examples, we obtain $(3,9)$ admissible solutions starting from the $(3,3)$ admissible solutions.
\end{enumerate}

The organisation of the paper is as follows. Section 1 provides an introduction to the problem being studied as well as a summary of the results. Section 2 provides the necessary background that connects VVMFs and RCFTs. Section 3 focuses on $(3,0)$ RCFTs We first show that all characters and quasi-characters  that arise from $(3,0)$ MLDE have a uniform presentation in terms of the $_3F_2$ hypergeometric function. Then, we extend this presentation to the basis of VVMFs, $\Xi$, associated with any $(3,0)$ RCFT. These bases are presented in multiple tables that provide the information needed to construct the $q$-series for all characters. In section 4, we move to the study of $(3,3)$ theories. In section 4.1, we provide a new construction of all known $(3,3)$ admissible solutions using a duality due to Bantay and Gannon\cite{Bantay:2007zz}. Using this construction, we are able to show that only one class containing $7$ characters of $(3,3)$ theories lead to consistent fusion rules. Section 4.2 obtains quasi-characters and admissible characters for each of these seven theories. In section 5, we look for and find RCFTs among the $(3,6)$ admissible solutions constructed in section 3. We conclude with some remarks. In appendix A, we discuss some of the modular forms that appear in the paper. Appendix B consists of a large number of tables that provide details of the $(3,6)$ theories obtained from $(3,0)$ theories. 

\section{Background}

\subsection{Basic theory of Vector Valued Modular Forms}

Let $\mathbb{X}(\tau)=(\chi_0(\tau),\chi_1(\tau),\ldots,\chi_{n-1}(\tau))^T$ be a Vector Valued Modular Form (VVMF) of rank $n$, weight $w$ and multiplier $\rho$ of $PSL(2,\BZ)$. Then, one has
\[
\mathbf{X}\left(\tfrac{a\tau+b}{c\tau+d}\right)
= (c\tau+d)^w \ \rho\left(\tfrac{a\tau+b}{c\tau+d}\right)\ \mathbb{X}(\tau)\ ,
\]
where $\left(\begin{smallmatrix} a & b \\ c & d \end{smallmatrix}\right)\in PSL(2,\BZ)$. Let $\mathcal{M}^!_w(\rho)$ denote the space of weakly holomorphic VVMFs. Our interest will largely be with VVMFs of weight zero. These appear naturally as the vector of characters of an RCFT with the rank being given by the number of characters.

We will make use of some results from the theory of VVMFs largely due to Bantay and Gannon that are of interest to us\cite{Bantay:2001ni,Bantay:2005vk,Bantay:2007zz,Bantay:2010uy,Gannon:2013jua}. We will assume that the VVMFs of interest are such that the multipliers are admissible so that the theorems we quote are applicable. Consider the following modular differential operators:
\begin{equation}
\nabla_{1,w} := \frac{E_4(\tau)E_6(\tau))}{\Delta(\tau)}\, \mathcal{D}_w\quad,\quad
\nabla_{2,w} := \frac{E_4(\tau)^2}{\Delta(\tau)}\, \mathcal{D}_w^2\quad,\quad\nabla_{3,w} := \frac{E_6(\tau)}{\Delta(\tau)}\, \mathcal{D}_w^3\quad.
\end{equation}
These operators are maps from $\mathcal{M}^!_w(\rho)$ to itself. The following properties are relevant for us.
\begin{enumerate}
\item From \cite[Thm. 3.3(a)]{Gannon:2013jua}, we have $\mathcal{M}^!_w(\rho)$ is a vector space of dimension equal to the rank. Let $\Xi=(\mathbb{X},\mathbb{Y}_1,\ldots,\mathbb{Y}_{n-1})$ be a basis for $\mathcal{M}^!_w(\rho)$.
\item Let $\Xi(\tau)$ denote the $n\times n$ matrix whose columns represent the basis for $\mathcal{M}^!_w(\rho)$. One has the following $q$-expansion that defines $\Lambda$ and $\mathcal{Y}$ as well as fixes the normalisation for the basis.
\begin{equation}\label{vvmfexpansion}
\Xi(\tau) = q^\Lambda \Big( \mathbf{1}_n + \mathcal{Y}\ q + O(q^2)\Big)
\end{equation}
We work typically in a basis where $\Lambda$ is the diagonal matrix with entries $(\lambda_1,\ldots,\lambda_n)$ and $\mathcal{Y}$ is a $n \times n$ matrix whose entries we write as follows:
\begin{equation}
\mathcal{Y}=\begin{pmatrix}
y_{11} & y_{12}& \cdots & y_{1n} \\
y_{21} & \ddots & \ddots & y_{2n} \\
\vdots & \ddots & \ddots & \vdots \\
y_{n1} & y_{n2} & \cdots & y_{nn}
\end{pmatrix}\ ,
\end{equation}
whose entries will generically be rational. However, we will find them to have integral entries in a large class of examples that we consider. We will indicate situations where we obtain rational entries.
\item From  \cite[Thm. 3.3(b)]{Gannon:2013jua} and \cite{Bantay:2007zz}, $\Xi$ is a solution to the following first-order matrix MLDE.
\begin{equation}\label{MMLDE}
\nabla_1 \Xi=  \Xi\left((J-984)\Lambda_w + [\Lambda_w,\mathcal{Y}_w]\right)\ ,
\end{equation}
where $\Lambda_w=\Lambda-\frac{w}{12}\mathbf{1}_n$ and $\mathcal{Y}_w = \mathcal{Y} + 2w \mathbf{1}_n$ and $J=\frac1q+744+\cdots$ is the modular $J$-function.  
\item Given a VVMF $\mathbb{X}$, one obtains $\mathcal{M}^!_w(\rho)$ as a $\BC(\nabla_1,\nabla_2,\nabla_2,J)$ module over $\mathbb{X}$. 
\end{enumerate}

\subsection{From RCFTs to VVMFs}

Consider an $n$ character RCFT with central charge $c$, conformal dimensions, $h_a$ and characters $\chi_a(\tau)$ for $a=0,1,\ldots,(n-1)$. $\chi_0(\tau)$ is taken to be the identity character and hence $h_0=0$. Let $\ell$ denote the Wronskian index of the RCFT.
Form the VVMF $\mathbb{X}$ given by
\begin{equation}
\mathbb{X}=(\chi_0(\tau),\ldots,\chi_{n-1}(\tau))^T\ .
\end{equation}
Define $\alpha_a = -\frac{c}{24}+h_a$. Then, one has the $q$ expansion 
\begin{equation}\label{charexpansion}
\chi_a(\tau) = q^{\alpha_a} \left( m_{0,a} + O(q)\right)\ ,
\end{equation}
with $m_{0,0}=1$. The Wronskian index of the RCFT is given by the formula
\begin{equation}
\ell = \frac{n(n-1)}{2}-6\sum_{a=0}^{n-1}\alpha_a\ .
\end{equation}

We identify the first column of $\Xi$ with $\mathbb{X}$. Then, by comparing Eq. \eqref{vvmfexpansion} and Eq. \eqref{charexpansion}, we see that
\begin{equation}
\lambda_1=\alpha_0\quad,\quad \lambda_{i+1} = (\alpha_i -1)\text{ for } i=1,\ldots,(n-1)\ . 
\end{equation}
\hl{It is important to note that one has to determine the $\lambda_i$  (without mod 1 ambiguity) for the matrix MLDE \ref{MMLDE} to be satisfied.} 
This provides the relation between the exponents that we associate with $\Xi$ with the ones natural to the RCFT.

Finally, using a procedure implicitly described in item 4 of the previous sub-section, we construct $(n-1)$ other VVMF's $\mathbb{Y}_i$ that share the same multiplier as $\mathbb{X}$.

\subsection{Quasi-characters}

In our application, we take $\mathbb{X}$ to be the weight zero VVMF constructed out of the $n$ characters of a known RCFT. The multiplier $\rho$ is fixed by the $S$ and $T$ matrices. 
\begin{equation}
T =\rho\Big(\left(\begin{smallmatrix} 1 & 1 \\ 0 & 1 \end{smallmatrix}\right)\Big)\quad,\quad
S =\rho\Big(\left(\begin{smallmatrix} 0 & -1 \\ 1 & 0 \end{smallmatrix}\right)\Big)\ .
\end{equation}
Repeated applications of the differential operators $\nabla_i$ ($i=1,2,3$) enables one to generate the complete basis. Let $\mathbb{X}$ and $\mathbb{Y}_i$ for $i=1,\ldots,(n-1)$ denote the generators for $\mathcal{M}_w^!(\rho)$. The $\mathbb{Y}_i$ will typically lead to quasi-characters as the coefficients of the $q$-expansion can be negative integers (possibly, rational). For simplicity, we will assume that the $\mathbb{Y}_i$ have integer coefficients.

\subsection{From quasi-characters to admissible ones}

Given an $(n,\ell)$ RCFT with characters forming the VVMF $\mathbb{X}$ and quasi-characters $\mathbb{Y}_i$ that generate the basis of weight zero modular forms with multiplier associated with the RCFT. In the following, we will describe three constructions of candidate VVMF's which might lead to admissible characters. A candidate VVMF is said to be \textit{admissible} if all the coefficients in its $q$-series are non-negative integers. Let $A$ be a subset of $[1,n-1]$ with $|A|=r\leq (n-1)$.
\begin{enumerate}
\item   To the subset $A$, associate the VVMF:
\begin{equation}
\mathbb{U}_{r,A} = \mathbb{X} + \sum_{i\in A} b_i\ \mathbb{Y}_i\ ,
\end{equation}
with $b_i \in \mathbb{Z}_{>0}$. We scan the values of $b_i$ looking for admissible characters.   Looking at the leading exponents of the VVMF $\mathbb{U}_r$  we will generically obtain a $(n,\ell+6r)$ admissible character. However, for special choices of the $b_i$, the Wronskian index can be smaller. For the case when $n=3$, we will have three such VVMF's i.e., $\mathbb{U}_{1,1}$,  
$\mathbb{U}_{1,2}$,  and $\mathbb{U}_{2}$. Of course, $\mathbb{U}_0=\mathbb{X}$.
\item Define
\begin{equation}
\mathbb{W}_{r,A} = (\widehat{J}-y_{11}+b)\ \mathbb{X} - \sum_{i=2}^{n} y_{i1}\ \mathbb{Y}_{i-1}+ \sum_{i\in A} c_{i}\ \mathbb{Y}_{i}\ ,\ ,
\end{equation}
where $b\in \mathbb{Z}_{\geq 0}$ and $c_i\in \mathbb{Z}_{>0}$ and $\widehat{J}=(J-744)$. We search for admissible characters by scanning through the allowed values of the coefficients. For the case, when $n=3$, we will have four classes of VVMFs: $\mathbb{W}_0$, $\mathbb{W}_{1,1}$, $\mathbb{W}_{1,2}$ and $\mathbb{W}_2$.
\item Define
\begin{equation}
\mathbb{V}_{0} = (\widehat{J}+b_1)\Big((\widehat{J}-y_{11})\ \mathbb{X}- \sum_{i=2}^{n} y_{i1} \ \mathbb{Y}_{i-1}\Big) + (b_2^*+ b_2)\ \mathbb{X} +\sum_{i=1}^{n-1} s_i^*\ \mathbb{Y}_{i}\ ,
\end{equation}
where $b_2^*$ and $s_i^*$ are chosen so that the leading terms in $\mathbb{V}_0$ take the form
\begin{equation}
\mathbb{V}_0 = \begin{pmatrix}
q^{\lambda_1-2} \left( 1 + b_1 q + b_2 q^2 + O(q^3)\right) \\[4pt]
q^{\lambda_i} \left (0 + O(q)\right) \text{ for }i=2,\ldots,n
\end{pmatrix}\ .
\end{equation}
Further we need $b_1,b_2 \in \mathbb{Z}_{\geq 0}$. We do not provide explicit formulae for $b_2^*$ and $s_i^*$ as they are messy and unilluminating to see.
From this we see that admissible characters arising in this fashion will generically have Wronskian index $(\ell+12)$. It may happen that for special values of $(b_1,b_2)$, we might find admissible characters with Wronskian index $(\ell+6)$ (in co-dimension one) and admissible characters with Wronskian index $\ell$ at co-dimension zero. The central charge of the potential CFT is $(c+48)$.
Next, define
\begin{equation}
\mathbb{V}_{r,A} = \mathbb{V}_0 + \sum_{i\in A} s_{i}\ \mathbb{Y}_{i}\ ,
\end{equation}
with $s_i\in\mathbb{Z}_{>0}$. If these lead to admissible characters, the Wronskian index will generically be $(\ell+12+6r)$.
\end{enumerate}
Remark: We caution the reader that the notation used here is slightly different from the one used in \cite{Govindarajan:2025rgh}.

\section{$(3,0)$ theories and the hypergeometric ODE}

Let $D=w\frac{d}{dw}$. The ${}_3F_2$ hypergeometric Ordinary Differential Equation (ODE) is given by
\begin{equation}
w (D+\beta_0)(D+\beta_1)(D+\beta_2)\ F = (D-\alpha_0)(D-\alpha_1)(D-\alpha_2)\ F\ .
\end{equation} 
The local exponents in the Frobenius power series expansion about the regular singular points are\cite{Beukers1989}
\begin{equation}\label{localexponents}
\begin{split}
(\alpha_0,\alpha_1,\alpha_2) &\quad\text{ at } w=0\ , \\
(\beta_0,\beta_1,\beta_2) &\quad\text{ at } w=\infty \ ,\\
\left(0,1, 2-\sum_i(\alpha_i+\beta_i)\right) &\quad \text{ at } w=1\ .
\end{split} 
\end{equation}
A basis about $w=0$ is given by (with $(i=0,1,2)$)
\[
w^{\alpha_i}\ {}_3F_2\left(
\alpha_i+\beta_0,\ \alpha_i+\beta_1,\ \alpha_i+\beta_2 ~;~
\alpha_i-\alpha_0+1,\widehat{\alpha_i-\alpha_1+1},\alpha_i-\alpha_2+1~;~w\right)\quad \ ,
\]
where $\widehat{\phantom{abx}}$ indicates that we delete the $(\alpha_i-\alpha_i+1)$ term.

In order to connect the  ${}_3F_2$ hypergeometric ODE with the MLDE for $(3,0)$ RCFTs, we make the following change of variables. Let $w=\frac{1728}{J(\tau)}$. One has the following properties of the $(3,0)$ MLDE.
\begin{enumerate}
\item The cusp $\tau\rightarrow i\infty$ gets mapped to $w=0$ and the elliptic points $\tau= e^{\frac{2\pi i}{3}}$ gets mapped to $w=\infty$ and $\tau=i$ gets mapped to $w=1$. 
\item The monodromy about $w=0$ is given by the $T$-matrix, monodromy about $w=1$ by the $S$ matrix and the monodromy about $w=\infty$ with the $U\hl{(=ST^{-1})}$ matrix.
\item For a unitary RCFT with central charge $c$ and non-zero conformal weights $(h_1,h_2)$, the exponents of the $T$-matrix are 
\[
\alpha_0=-\frac{c}{24}\quad,\quad \alpha_1=-\frac{c}{24}+h_1\quad,\quad\alpha_2=-\frac{c}{24}+h_2\ .
\]
\item For a $(3,\ell)$ RCFT, the Wronskian index $\ell$ is given by
\begin{equation}
\ell = 3-6\sum_{i=0}^{n-1}\, \alpha_i\ . \label{ellformula}
\end{equation}
We see that $\ell=0$ implies that $\sum_i\alpha_i=\frac12$. 
\item Since $U^3=I$, the eigenvalues of $U$ are cube-roots of unity. For $j=0,1,2$, let $b_j$ denote the number of eigenvalues of $U$ with eigenvalue $\exp(\frac{2\pi ij}{3})$. Similarly, $S^2=I$ implies that the eigenvalues of $S$ are $\pm1$. Let $a_1$ denote the  number of eigenvalues of $S$ with eigenvalue $-1$. Then,
\begin{equation}
\ell = -9 + 3 a_1 + 2(b_1+2b_2)
\end{equation}
$\ell=0$ implies only two possible solutions. (i) $a_1=3$, $b_1=b_2=0$ and (ii) $a_1=b_1=b_2=1$. Solution (i) implies that $S=-I$ which would correspond to theories that are tensor product of three one-character RCFTs. We assume that this is not the case. Then, we are left with only solution (ii). In particular this implies that $b_0=b_1=b_2=1$ and $a_0=2$ and $a_1=1$.
\end{enumerate}
\begin{prop} Assuming that the local exponents of the MLDE about $w=0$ do not differ by an integer, the solutions to the $(3,0)$ MLDE are given by solutions to the ${}_3F_2$ hypergeometric ODE. \label{PropHyp}
The three solutions are given by
\begin{equation}\label{30hypsol}
G(\alpha_0,\alpha_1-1,\alpha_2-1)\quad,\quad
G(\alpha_1,\alpha_0-1,\alpha_2-1)\quad,\quad
G(\alpha_2,\alpha_0-1,\alpha_1-1)\ .
\end{equation}
where
\begin{equation*}
    G(a,b,c;w):= \left(\frac{w}{1728}\right)^{a}\  {}_3F_2 \Big(a,a+1/3,a+2/3;  a-b,a-c;w\Big)\ ,
\end{equation*}
\end{prop}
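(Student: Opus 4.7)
The plan is to transform the $(3,0)$ MLDE to an ODE in the uniformising variable $w = 1728/J(\tau)$ and identify it with the ${}_3F_2$ Fuchsian ODE by a rigidity argument. Because $w$ is a Hauptmodul for $PSL(2,\mathbb{Z})$, it identifies the compactified modular curve with $\mathbb{P}^1_w$, sending the cusp $\tau = i\infty$ to $w=0$, the order-three elliptic point $\tau = e^{2\pi i/3}$ to $w=\infty$, and the order-two elliptic point $\tau = i$ to $w=1$, as recalled in item~1 after \eqref{ellformula}. The condition $\ell = 0$ is precisely the statement that the Wronskian has no zeros beyond those forced by the elliptic points, so the transformed equation is Fuchsian on $\mathbb{P}^1$ with singular locus exactly $\{0,1,\infty\}$.

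Next, I would identify the local exponents at each singular point. At $w=0$, the $q$-expansion $\chi_a \sim q^{\alpha_a}$ together with $q \sim w/1728$ gives exponents $(\alpha_0,\alpha_1,\alpha_2)$. At $w=\infty$, the local monodromy is $U$, whose spectrum by item~5 of the preceding list is $\{1,\omega,\omega^2\}$ when $\ell=0$; absence of extra Wronskian zeros pins the exponents to the minimal representatives $(\beta_0,\beta_1,\beta_2) = (0,1/3,2/3)$. At $w=1$ the monodromy is $S$ with spectrum $\{+1,+1,-1\}$, so the two $+1$ eigenvalues produce exponents $(0,1)$ and the $-1$ eigenvalue produces $1/2$; Fuchs' identity $\sum_{\text{sing}}\text{exp} = \binom{3}{2} = 3$ together with $\sum \alpha_i = 1/2$ from \eqref{ellformula} verifies the total $1/2 + 1 + 3/2 = 3$, consistent with the formula $2-\sum_i(\alpha_i+\beta_i)$ recorded in \eqref{localexponents}. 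Because the exponents at $w=1$ contain both $0$ and $1$, no accessory parameters are free, and the Riemann--Papperitz rigidity of third-order Fuchsian ODEs on $\mathbb{P}^1$ with three regular singularities forces the transformed equation to coincide with the ${}_3F_2$ hypergeometric ODE having $(\alpha_0,\alpha_1,\alpha_2;\ 0,1/3,2/3)$ as its parameters. Under the hypothesis that the $\alpha_i$ are pairwise non-congruent modulo $\mathbb{Z}$, the Frobenius basis at $w=0$ displayed after \eqref{localexponents} is genuinely three-dimensional and specialises to
\[
w^{\alpha_i}\, {}_3F_2\!\left(\alpha_i,\alpha_i+\tfrac13,\alpha_i+\tfrac23;\ \alpha_i-\alpha_j+1,\alpha_i-\alpha_k+1;\ w\right),
\]
with $\{i,j,k\}=\{0,1,2\}$. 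Absorbing the constant $1728^{-\alpha_i}$ to form $(w/1728)^{\alpha_i}$ and rewriting $\alpha_j = (\alpha_j-1)+1$ in the lower parameters exhibits this as $G(\alpha_i,\alpha_j-1,\alpha_k-1;w)$, reproducing the three solutions in \eqref{30hypsol}.

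The main obstacle is pinning down the precise integer shifts of the exponents at the two elliptic singularities, since the monodromies $U$ and $S$ determine them only modulo $\mathbb{Z}$. Upgrading to the minimal representatives $(0,1/3,2/3)$ at $w=\infty$ and $(0,1,1/2)$ at $w=1$ requires the structural input that $\ell=0$ admits no extra Wronskian zeros. The cleanest way to rigorise this is to combine the Wronskian-index formula $\ell = -9 + 3a_1 + 2(b_1 + 2b_2)$ from item~5 of the preamble with Fuchs' identity to check that the exponent assignment above is the unique one consistent with the spectra of $S$ and $U$, with $\sum\alpha_i = 1/2$, and with $\ell = 0$. Once this local data is fixed, the remainder is a direct appeal to the rigidity of the ${}_3F_2$ equation and standard Frobenius theory.
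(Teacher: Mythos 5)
Your route is genuinely different from the paper's: the paper simply writes the $(3,0)$ MLDE explicitly in the coordinate $z=1/w$ as a two-parameter Fuchsian ODE, reads off the indicial exponents at $z=0,1,\infty$, and identifies the parameters $\mu_1,\mu_2$ (hence the ODE itself) with those of the ${}_3F_2$ equation, whereas you argue from the local monodromy data plus a rigidity claim. Most of your local analysis is sound --- in particular your exponents $(0,\tfrac13,\tfrac23)$ at $w=\infty$ are the ones consistent with Fuchs' relation and with the upper parameters $a,a+\tfrac13,a+\tfrac23$ of $G$ --- but the pivotal step is misstated in a way that fails if taken literally. A third-order Fuchsian ODE on $\mathbb{P}^1$ with three regular singular points carries \emph{one} accessory parameter; ``Riemann--Papperitz rigidity'' is a second-order phenomenon, and the local exponents alone do not determine the equation. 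Worse, the justification you offer --- that the exponents at $w=1$ containing both $0$ and $1$ leaves no accessory parameter free --- is backwards: integer-spaced exponents are exactly where a logarithmic solution, and with it a genuinely free accessory parameter, can appear.

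What actually fixes the accessory parameter is an input you already have but do not deploy at this step: $S^2=I$, so the local monodromy at $w=1$ is semisimple with eigenvalues $(+1,+1,-1)$, i.e.\ a pseudo-reflection; equivalently there are two linearly independent holomorphic solutions at $w=1$ and no logarithm linking the exponents $0$ and $1$. The correct rigidity statement (Beukers--Heckman/Katz) is that an irreducible Fuchsian equation of order $n$ on $\mathbb{P}^1\setminus\{0,1,\infty\}$ whose monodromy at $1$ is a pseudo-reflection, with exponents $0,1,\dots,n-2,s$ there, is the generalized hypergeometric equation. With that substitution your argument closes, so the gap is repairable; but as written the key inference is unjustified. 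The paper's direct computation sidesteps the issue entirely, since exhibiting the transformed ODE explicitly leaves no accessory-parameter ambiguity to dispose of.
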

\begin{proof}[Proof]
When the local exponents do not differ by an integer, the monodromy matrix (the $T$-matrix) about $w=0$ is diagonal. If the local exponents differ by an integer, one may have logarithmic solutions that lead to non-diagonal $T$-matrix. We call such cases exceptional and discuss them later.

The $(3,0)$ MLDE in the coordinate $z=\frac{1}{w}$ is given by
\begin{equation}
\Big[\partial_z^3 + \Big(\tfrac2{z}+\tfrac3{2(z-1)}\Big)\partial_z^2  +\Big(\tfrac2{9z^2} + \tfrac4{3z(z-1)} 
+ \tfrac{\mu_1}{z(z-1)} \Big)\partial_z -
\Big(\tfrac{\mu_2}{z^2(z-1)} \Big)\Big]\ \chi=0
\end{equation}
The indicial equation about $z=0$ has exponents $\beta=(1,\tfrac13,\tfrac23)$ and the indicial equation about $z=1$ has exponents $(0,1,\frac12)$. The indicial equation at $z=\infty$ identifies 
\[
\mu_1=(\alpha_0\alpha_1+\alpha_1\alpha_2+\alpha_0\alpha_2)-\tfrac1{18}  \text{ and }\mu_2=-\alpha_0\alpha_1\alpha_2\ .
\]
 This is consistent with the exponents obtained using Eq. \eqref{localexponents}.
\end{proof}
\textbf{Remarks:}
\begin{enumerate}
\item A similar formula can be written for $(2,0)$ RCFTs. The characters are now given in terms of the ${}_2F_1$ hypergeometric functions. Explicitly, one has
\[
\left(\frac{w}{1728}\right)^{\alpha_0}{}_2F_1(\alpha_0,\alpha_0+\tfrac13;\alpha_0-\alpha_1+1;w)\quad,\quad
\left(\frac{w}{1728}\right)^{\alpha_1}{}_2F_1(\alpha_1,\alpha_1+\tfrac13;\alpha_1-\alpha_0+1;w)\ .
\]
Here $a_1=b_1=1$ and $b_2=0$. This agrees with the formula given in \cite{Chandra:2018pjq}.
\item A similar formula can be written for $(2,2)$ RCFTs. Explicitly, one has
\[
\left(\frac{w}{1728}\right)^{\alpha_0}{}_2F_1(\alpha_0,\alpha_0+\tfrac23;\alpha_0-\alpha_1+1;w)\quad,\quad
\left(\frac{w}{1728}\right)^{\alpha_1}{}_2F_1(\alpha_1,\alpha_1+\tfrac23;\alpha_1-\alpha_0+1;w)\ .
\]
Here $a_1=b_2=1$ with $b_1=0$. This agrees with the formula given in \cite{Chandra:2018pjq}.
\item Franc and Mason have shown a large number of $(3,0)$ RCFT's are given by solutions to the ${}_3F_2$ hypergeometric ODE\cite{Franc:2016,Franc:2020}. However, they fixed the exponents $\alpha_i\mod 1$ in their study. To the extent that we checked, their formulae match ours.
\item All solutions that appeared in the $(3,0)$ modular bootstrap fit the given formula\cite{Das:2021uvd}. We will discuss the case when the exponents about $w=0$ differ by an integer separately.
\item The infinite families of quasi-characters constructed by Mukhi, Poddar and Singh also fit this formula\cite{Mukhi:2020gnj}.
\end{enumerate}

\subsection{The exceptional cases}

We consider the examples obtained from the $(3,0)$ modular bootstrap\cite{Das:2021uvd}. There are four families of examples, two with $c=8$ and another two with $c=16$ that appear in the table.

\begin{enumerate}
\item First, consider the two examples of $(c=8,h_1=h_2=3/4)$ and $(c=16,h_1=h_2=5/4)$. Here the two of the exponents are equal, $\alpha_1=\alpha_2$. The Frobenius power series method gives two solutions that are given by the hypergeometric functions given in the conjecture. They are
\begin{equation}
\begin{split}
F_0&=\left(\frac{w}{1728}\right)^{\alpha_0}{}_3F_2(\alpha_0, \alpha_0+\tfrac13,\alpha_0+\tfrac23;\alpha_0-\alpha_1+1;w)\ ,\\
F_1&=\left(\frac{w}{1728}\right)^{\alpha_1}{}_3F_2(\alpha_1,\alpha_1+\tfrac13,\alpha_1+\tfrac23;\alpha_1-\alpha_0+1;w)\ .
\end{split}
\end{equation}
The third solution is given by
\[
F_2 = F_1 \log w + \text{power series in } q\ .
\]
The solution $F_1$ turns out to be unstable in both cases -- the coefficients are rational but there is no single factor that makes them integral to all orders.
\item Next, consider the case of $(c=8,h_1=1/2,h_2=1)$. Solving the MLDE leads to an infinite family of solutions (parametrised by $b$) given by
\begin{equation}
    \mathbb{X} = \begin{pmatrix}
        G(-\tfrac13,-\tfrac56,-\tfrac13;w) + b\  G(\tfrac23,-\tfrac13,-\tfrac45;w) \\
        G(\tfrac16,-\tfrac45,-\tfrac13;w) \\
        G(\tfrac23,-\tfrac13,-\tfrac45;w)
    \end{pmatrix}\ ,
\end{equation}
Positivity of the $q$-series requires $b\geq -248$. The RCFT $D_{4,1}^{\otimes2}$ appears when $b=-192$ and the RCFT $D_{8,1}$ apears when $b=-128$.
\item Next, consider the case of $(c=16,h_1=1,h_2=3/2)$. Solving the MLDE leads to an infinite family of solutions (parametrised by $b$) given by
\begin{equation}
    \mathbb{X} = \begin{pmatrix}
        G(-\tfrac23,-\tfrac23,-\tfrac16;w) + b\  G(\tfrac13,-\tfrac53,-\tfrac16;w) \\
        G(\tfrac13,-\tfrac53,-\tfrac16;w) \\
        G(\tfrac56,-\tfrac53,-\tfrac23;w)
    \end{pmatrix}\ ,
\end{equation}
Positivity of the $q$-series requires $b\geq -496$. The RCFT given by $\mathcal{E}_1(D_{4,1}^{\otimes4})$, which is a one-character extension of $D_{4,1}^{\otimes4}$,   appears when $b=-384$.

 \item Next consider the case of $(c=16,h_1=1/2,h_2=2)$. Solving the MLDE leads to an infinite family of solutions (parametrised by $b$) given by
\begin{equation}
    \mathbb{X} = \begin{pmatrix}
        G(-\tfrac23,-\tfrac76,\tfrac13;w) + b\   G(\tfrac43,-\tfrac53,-\tfrac76;w) \\
        G(\tfrac16,-\tfrac53,\tfrac13;w) \\
        G(\tfrac43,-\tfrac53,-\tfrac76;w)
    \end{pmatrix}\ ,
\end{equation}
Positivity of the $q$-series requires $b\geq -69752$. The RCFT associated with $D_{16,1}$ appears when $b=-32768$. 
\end{enumerate}

\subsection{$(3,0)$ Quasi-characters from VVMFs}

Define $\lambda_1=\alpha_0$ and $\lambda_{i+1}=\alpha_i-1$ for $i=1,2$.
These are the exponents that appear in the Matrix MLDE associated with a $(3,0)$ admissible character. The VVMF associated with an RCFT with central charge $c$ and non-zero conformal weights $h_1$ and $h_2$ is given by
\begin{equation}
    \mathbb{X} = \begin{pmatrix}
        G(\lambda_1,\lambda_2,\lambda_3;z) \\
       y_{21}\ G(\lambda_2+1,\lambda_1-1,\lambda_3;z) \\
        y_{31}\ G(\lambda_3+1,\lambda_1-1,\lambda_2;z)
    \end{pmatrix}\ ,
\end{equation}
where $\lambda_1=-\frac{c}{24}$, $\lambda_2=-\frac{c}{24}+h_1-1$ and $\lambda_3=-\frac{c}{24}+h_2-1$. The degeneracies $y_{21}$ and $y_{31}$ are determined by details of the RCFT when they are known. In other cases, we can choose the smallest values such that the coefficients of the $q$-series are integral.

The matrix of solutions is then are determined by the data 
\begin{equation}
\mathcal{Y} =\begin{pmatrix}
 y_{11} & y_{12} & y_{13} \\
 y_{21} & y_{22} & y_{23} \\
 y_{31} & y_{32} & y_{33} 
\end{pmatrix}\ ,
\end{equation}
and
\begin{equation}\label{30Xi}
{\footnotesize
\Xi(\tau)=\begin{pmatrix}
G\left(\lambda_1,\lambda_2,\lambda_3; z(\tau)\right) &
y_{21} G\left(\lambda_1+1,\lambda_2-1,\lambda_3; z(\tau)\right) &
y_{31} G\left(\lambda_1+1,\lambda_2,\lambda_3-1; z(\tau)\right)\\
y_{12} G\left(\lambda_2+1,\lambda_1-1,\lambda_3; z(\tau)\right) &
G\left(\lambda_2,\lambda_1,\lambda_3; z(\tau)\right) &
y_{32} G\left(\lambda_2+1,\lambda_1,\lambda_3-1; z(\tau)\right) \\
 y_{13} G\left(\lambda_3+1,\lambda_1-1,\lambda_2; z(\tau)\right) &
 y_{23} G\left(\lambda_3+1,\lambda_1,\lambda_2-1; z(\tau)\right) &
 G\left(\lambda_3,\lambda_1,\lambda_2; z(\tau)\right) 
\end{pmatrix}
}\ ,
\end{equation}

\begin{equation}
\left(\nabla_1 - (\alpha_0 J+b_1)\right)\, \mathbb{X} 
= y_{21} (\alpha_1-\alpha_0) \mathbb{Y}_1 + y_{31} (\alpha_2-\alpha_0) \mathbb{Y}_2\ ,
\end{equation}
where
\[
b_1 = \frac{192 \alpha _0 \left(-9 \alpha _1 \left(\alpha _2-1\right)+9
   \alpha _2+9 \alpha _0 \left(\alpha _1+\alpha
   _2-1\right)-7\right)}{\left(\alpha _0-\alpha _1+1\right)
   \left(\alpha _0-\alpha _2+1\right)}\ .
\]
\begin{equation}
\left(\nabla_2 - (a_0 J+b_2)\right)\, \mathbb{X} 
= y_{21} (a_1-a_0) \mathbb{Y}_1 + y_{31} (a_2-a_0) \mathbb{Y}_2\ ,
\end{equation}
where $a_i=(\alpha _i^2-\tfrac{\alpha _i}{6})$ for $i=0,1,2$ and
\[
b_2=\frac{32 \alpha _0 \left(3 \alpha _0+1\right) \left(18 \alpha
   _0^2+3 \left(6 \alpha _1+6 \alpha _2+1\right) \alpha _0-18
   \alpha _1 \left(\alpha _2-1\right)+18 \alpha
   _2-8\right)}{\left(\alpha _0-\alpha _1+1\right) \left(\alpha
   _0-\alpha _2+1\right)}\ .
\]

\noindent The entries of the $\mathcal{Y}$ are determined in terms of the exponents $\alpha_i$ and $y_{21}$ and $y_{31}$. 
\begin{align*}
y_{ii}&=\frac{1728\, \lambda_i \left(\lambda_i+\frac{1}{3}\right)
   \left(\lambda_i+\frac{2}{3}\right)}{\prod_{j\neq i}(\lambda_i-\lambda_j) }-744\, \lambda_i\\ 
\frac{y_{23}y_{31}}{y_{21}} & = \frac{192 \left(\alpha _0-\alpha _1\right) \left(9 \alpha _2^3-18
   \alpha _2^2+11 \alpha _2-2\right)}{\left(\alpha _0-\alpha
   _2\right) \left(\alpha _0-\alpha _2+1\right) \left(-\alpha
   _1+\alpha _2-1\right) \left(\alpha _2-\alpha _1\right)} \\
  \frac{y_{32}y_{21}}{y_{31}} &= \frac{192 \left(\alpha _0-\alpha _2\right)\left(9 \alpha _1^3-18 \alpha _1^2+11 \alpha
   _1-2\right) }{\left(\alpha
   _0-\alpha _1\right) \left(\alpha _0-\alpha _1+1\right)
   \left(\alpha _1-\alpha _2-1\right) \left(\alpha _1-\alpha
   _2\right)}\\
   y_{12}y_{21}&={\scriptstyle \frac{18432 \alpha _0 \left(9 \alpha _0^2+9 \alpha _0+2\right)
   \left(18 \alpha _0^3-9 \left(2 \alpha _1-2 \alpha _2-5\right)
   \alpha _0^2-9 \left(2 \alpha _1^2+\left(4 \alpha _2-2\right)
   \alpha _1-6 \alpha _2-1\right) \alpha _0+\left(\alpha
   _1-1\right) \left(-36 \alpha _2+9 \alpha _1 \left(2 \alpha
   _2-3\right)+14\right)\right)}{\left(\alpha _0-\alpha _1\right)
   \left(\alpha _0-\alpha _1+1\right){}^2 \left(\alpha _0-\alpha
   _1+2\right) \left(\alpha _0-\alpha _2+1\right) \left(\alpha
   _1-\alpha _2\right)}}\\
y_{13}y_{31} & = {\scriptstyle -\frac{18432 \alpha _0 \left(9 \alpha _0^2+9 \alpha _0+2\right)
   \left(18 \alpha _0^3+9 \left(2 \alpha _1-2 \alpha _2+5\right)
   \alpha _0^2-9 \left(2 \alpha _2^2-2 \alpha _2+\alpha _1
   \left(4 \alpha _2-6\right)-1\right) \alpha _0+\left(18 \alpha
   _1 \left(\alpha _2-2\right)-27 \alpha _2+14\right)
   \left(\alpha _2-1\right)\right)}{\left(\alpha _0-\alpha
   _1+1\right) \left(\alpha _0-\alpha _2\right) \left(\alpha
   _0-\alpha _2+1\right){}^2 \left(\alpha _0-\alpha _2+2\right)
   \left(\alpha _1-\alpha _2\right)}}
\end{align*}
\textbf{Remarks:} 1. \hl{The degeneracies} $y_{21}$ and $y_{31}$ are not fixed here and have to be given as input. \\ 
2. One has the relation $y_{11}+y_{22}+y_{33}=252$.

\subsection{Examples}

In this subsection, we will present the basis of quasi-characters that arise from $(3,0)$ RCFTs that have been discussed in \cite{Das:2022uoe} (for the unitary examples) and \cite{Duan:2022ltz} (for the non-unitary examples). The results are presented mostly as a series of tables and organised by RCFTs that share the same S-matrix and potentially belong to the same Modular Tensor Category (MTC). The tables further give the multiplicities $m_1$ and $m_{2}$ for each type. Each row gives the following data: $(c,h_1,h_2)$ and $\mathcal{Y}$ matrix. This data is sufficient to write out the associated $\Xi$ using Eq. \eqref{30Xi}.

Interestingly, $(3,6)$ admissible characters of type $W_0$ arise for $b>0$ in all $(3,0)$ examples. So we do not include this in our tables.

\subsubsection{Quasi-characters for the $B_r$ series}

The $B_{r,1}$ ($r\geq 0$) series with $c=\frac{2r+1}{2}$, $h_1=\frac{1}{2}$, $h_2=\frac{2r+1}{16}$. For these theories, one has $y_{21}=2r+1$ and $y_{31}=2^r$ leading to the following $\mathcal{Y}$-matrix:
\begin{equation}
\mathcal{Y}= \begin{pmatrix}
 r (2 r+1) & -\frac{1}{3} (2 r-31) \left(4 r^2+68 r+225\right) &
   2^{12-r} (2 r-23) \\
 2 r+1 & -(r-11) (2 r+25) & -2^{12-r} \\
 2^r & -2^r (2 r+25) & 2 r-23 \\
\end{pmatrix}
\end{equation}
Here $r=0$ corresponds to the Ising model, $r=1$ is also $A_{1,2}$ and $r=2$ is also $C_{2,1}$. Note that column three of the $\mathcal{Y}$ matrix is fractional for $r>12$. Multiplying $\mathbb{Y}_2$ by a suitable power of $2$ ($2^{r-12}$), we obtain quasi-characters with integral entries. Alternately, when we construct admissible characters by taking linear combinations involving $\mathbb{Y}_2$, we take the coefficients to be integer multiples of the same power of two.

All theories share the same $S$-matrix:
\[
S=\begin{pmatrix}\frac{1}{2} &\frac{1}{2} &\frac{1}{\sqrt{2}}\\
    \frac{1}{2} &\frac{1}{2} & -\frac{1}{\sqrt{2}}\\
     \frac{1}{\sqrt{2}} &-\frac{1}{\sqrt{2}} & 0
     \end{pmatrix}\ .
\]

\subsubsection{Quasi-characters for the $D_r$ series}

For the $D_{r,1}$ series, one has $c=r$, $h_1=\frac{1}{2}$, $h_2=\frac{r}{8}$. For these theories, one has $y_{21}=2r$ and $y_{31}=2^{r-1}$ leading to the following $\mathcal{Y}$-matrix:
\begin{equation}
\mathcal{Y}=\begin{pmatrix}
r (2 r-1) & -\frac{8}{3} (r-16) \left(r^2+16 r+48\right) &
   2^{14-r} (r-12) \\
 2 r & -(r+12) (2 r-23) & -2^{13-r} \\
 2^{r-1} & -2^r (r+12) & 2 (r-12) \\
\end{pmatrix}
\end{equation}
Here $D_{2,1}$ is also the same as $A_{1,1}^{\otimes2}$, and $D_{3,1}$ is the same as $A_{3,1}$. For $D_4$, we end up with a two-character theory as $h_1=h_2=\tfrac12$.  Note that column three of the $\mathcal{Y}$ matrix is fractional for $r>13$. Multiplying $\mathbb{Y}_2$ by a suitable power of $2$, we obtain quasi-characters with integral entries. Alternately, when we construct admissible characters by taking linear combinations involving $\mathbb{Y}_2$, we take the coefficients to be integer multiples of the same power of two.

The character corresponding to $h_2=\frac{r}{8}$ has multiplicity of two in the RCFT and thus this theory has four primaries. The unresolved and resolved $S$-matrix are
\begin{equation}
{\scriptstyle
S_{\text{unresolved}}= \frac{1}{2}\begin{pmatrix}
1 & 1 & 2 \\
1 & 1 & -2  \\
1 & -1 & 0 
\end{pmatrix}
\quad,\quad
S_\text{resolved}= \frac{1}{2}\begin{pmatrix}
1 & 1 & 1 & 1\\
1 & 1 & -1 & -1 \\
1 & -1 & i^{-r} & -i^{-r} \\
1 & -1 & -i^{-r} & i^{-r}
\end{pmatrix}\ .
}
\end{equation}
\hl{The unresolved $S$-matrix is associated with the distinct characters of the RCFT while the resolved $S$-matrix is associated with primaries of the RCFT.}
\subsubsection{Quasi-characters for other unitary examples}

\begin{longtable}{c||ccc|c}  
\toprule
S. No.  & $c$& $h_1$ &$h_2$&$\mathcal{Y}$ matrix\\  
\endfirsthead
\midrule
\rowcolor{mColor2}  I. & \multicolumn{4}{c}{GHM Dual to $B_{r,1}$ models ($0\leq r\leq15$)} \\ 
& \multicolumn{4}{c}{Multiplicities: $m_1=1,\ m_2=$1}\\
    & \multicolumn{4}{c}{S matrix: 
    $\begin{pmatrix}\frac{1}{2} &\frac{1}{2} &\frac{1}{\sqrt{2}}\\
    \frac{1}{2} &\frac{1}{2} & -\frac{1}{\sqrt{2}}\\
     \frac{1}{\sqrt{2}} &-\frac{1}{\sqrt{2}} & 0
     \end{pmatrix}$}  \\ \hline

    $\ell=0$  & $\frac{47-2r}{2}$&$\frac{3}{2}$ &$\frac{31-2r}{16}$  & $\begin{pmatrix}
        r(47-2r)& 23-2r & 2^{1+r}\\
       \frac{1}{3}(47-2r)(31-2r)(9+2r) & (11-r)(23-2r) &-2^{1+r}(1+2r) \\
        2^{11-r}(47-2r)& -2^{11-r} & -(1+2r) 
    \end{pmatrix}$\\
    \midrule
    \midrule
    
\rowcolor{mColor2} II. & \multicolumn{4}{c}{GHM Dual to $D_{r,1}$ models ($0\leq r\leq 15$)} \\ 
& \multicolumn{4}{c}{Multiplicities: $m_1=1,\ m_2=2$}\\
    & \multicolumn{4}{c}{S matrix: 
    $\begin{pmatrix}
    
    \frac{1}{2} &\frac{1}{2} & 1 \\
        \frac{1}{2}  &\frac{1}{2} & -1\\
        \frac{1}{2} & -\frac{1}{2} & 0 \\
        
    \end{pmatrix}$ }  \\ \hline

       $\ell$=0  & $24-r$&$\frac{3}{2}$ &$\frac{16-r}{8}$  & $\begin{pmatrix}
       
       - (24-r)(2r-1) & 24-2r  & 2^{1+r} \\
       \frac{8}{3}(24-r)(16-r)(4+r)& (12-r)(23-2r)&-2^{1+r}r  \\
        2^{12-r}(24-r)& -2^{11-r} & -2r
       
    \end{pmatrix}$\\

    \bottomrule
\caption{$(3,0)$ models that appear as GHM coset duals of the $B_{r,1}$ and $D_{r,1}$ theories.}
\end{longtable}

\begin{longtable}{c||ccc|c}  
\toprule
S. No.  & $c$& $h_1$ &$h_2$&$\mathcal{Y}$ matrix\\  
\endfirsthead
\midrule
\rowcolor{mColor2} III. & \multicolumn{4}{c}{Type $A_{2,1}^{\otimes 2}$} \\ 
& \multicolumn{4}{c}{Multiplicities: $m_1=4,\ m_2=4$}\\
    & \multicolumn{4}{c}{S matrix: $ \begin{pmatrix}
    
 \frac{1}{3}  &  \frac{4}{3} & \frac{4}{3} \\
 \frac{1}{3} & \frac{1}{3} & -\frac{2}{3} \\
 \frac{1}{3} & -\frac{2}{3} & \frac{1}{3} \\

\end{pmatrix}$}\\
\cmidrule{2-5}

\rowcolor{mColor1} & $4$  &  $\frac13$  & $ \frac23 $ & $\begin{pmatrix}

 16 & 34992 & 2916 \\
 3 & 80 & -54 \\
 9 & -2430 & 156 \\

\end{pmatrix}$\\
& 20& $\frac 43$ &$\frac53$ & $\left(
\begin{array}{ccc}
 80 & 108 & 12 \\
 1215 & 156 & -18 \\
 8748 & -1458 & 16 \\
\end{array}
\right)$\\
\rowcolor{mColor1} & 12& $\frac 23$ &$\frac43$ &$\left(
\begin{array}{ccc}
 156 & 4860 & 36 \\
 27 & 80 & -6 \\
 729 & -17496 & 16 \\
\end{array}
\right)$\\
\midrule
\rowcolor{mColor2} IV. & \multicolumn{4}{c}{Type $A_{4,1}$} \\ 
& \multicolumn{4}{c}{Multiplicities: $m_1=2,\ m_2=2$}\\
    & \multicolumn{4}{c}{S matrix: $\frac{1}{\sqrt{5}} \begin{pmatrix}
     1 & 2 & 2\\
1& \frac{1}{2} \left(\sqrt{5}-1\right)  & -\frac{1}{2} \left(\sqrt{5}+1\right) \\
1&- \frac{1}{2} \left(\sqrt{5}+1\right) & \frac{1}{2} \left(\sqrt{5}-1\right) \\
\end{pmatrix}$}\\
\cmidrule{2-5}
\rowcolor{mColor1} & $4$ & $\frac25$ & $\frac35$ & $\begin{pmatrix}

 24 & 14950 & 3400 \\
 5 & 92 & -119 \\
 10 & -1196 & 136 \\

\end{pmatrix}$\\
&20& $\frac75$ & $\frac85$ & $\left(
\begin{array}{ccc}
 120 & 52 & 14 \\
 2500 & 104 & -49 \\
 8125 & -676 & 28 \\
\end{array}
\right)$\\

\bottomrule
\caption{$(3,0)$ models of type $A_{2,1}^{\otimes2}$ and type $A_{4,1}$}
\end{longtable}

\subsubsection{Quasi-characters for some non-unitary examples}

For non-unitary theories, the quantities $(c,h_1,h_2)$ refer to the unitary presentation. By this, $c$ refers to what is sometimes called $c_{\text{eff}}$ which equals $(c-24 h_{\text{min}})$, where $c$ is the true cental charge and $h_{\text{min}}$ is the minimal conformal weight. Similarly, $(h_1,h_2)$ are shifted upwards by $-h_{\text{min}}$. Note that $h_{\text{min}}<0$ non-unitary theories and $h_{\text{min}}=0$ for unitary theories. \hl{It is important to note that one uses the true identity operator in the Verlinde formula to get well-defined fusion rules. This is discussed for the non-unitary minimal model, $M(5,2)$,  in \cite{Mathur:1988gt}.}

Type $LY_2$ refers to the class of non-unitary RCFTs in the class of the minimal model $M(7,2)$ and $LY_1$ refers to the minimal model, $M(5,2)$ associated with the Lee-Yang model\cite{DiFrancesco:1997nk,Duan:2022ltz}. The $S$-matrices quoted have been obtained using the results of Mukhi et al.\cite{Mukhi:2019cpu}.

\begin{longtable}{c||ccc|c|c}

S. No.  & $c$& $h_1$ &$h_2$&$\mathcal{Y}$ matrix &S matrix\\

\rowcolor{mColor2} V.& \multicolumn{5}{c}{Type $LY_2$}\\ 
& \multicolumn{5}{l}{Multiplicities: $m_1=1,\ m_2=1$}\\
    & \multicolumn{2}{l}{ S matrix: }& \multicolumn{3}{l}{$\frac{2}{\sqrt{7}}\begin{pmatrix}\cos \left(\frac{\pi }{14}\right) & \cos \left(\frac{3 \pi }{14}\right) & \sin \left(\frac{\pi }{7}\right) \\
 \cos \left(\frac{3 \pi }{14}\right) & -\sin \left(\frac{\pi }{7}\right) & -\cos \left(\frac{\pi }{14}\right) \\
 \sin \left(\frac{\pi }{7}\right) & -\cos \left(\frac{\pi }{14}\right) & \cos \left(\frac{3 \pi }{14}\right) \end{pmatrix}$
\hypertarget{LY2-SI}{($a$)}}\\
     
   & \multicolumn{2}{l}{~}& \multicolumn{3}{l}{$ \frac{2}{\sqrt{7}}\begin{pmatrix}\cos \left(\frac{3\pi }{14}\right) & \cos \left(\frac{ \pi }{14}\right) & \sin \left(\frac{\pi }{7}\right) \\
 \cos \left(\frac{ \pi }{14}\right) & -\sin \left(\frac{\pi }{7}\right) & -\cos \left(\frac{3\pi }{14}\right) \\
 \sin \left(\frac{\pi }{7}\right) & -\cos \left(\frac{3\pi }{14}\right) & \cos \left(\frac{ \pi }{14}\right) \end{pmatrix}$ \hypertarget{LY2-SII}{($b$)}}\\
    \cmidrule(lr){2-6} 
\rowcolor{mColor1} 1. & $\frac{236}{7}$ & $\frac{16}{7}$ & $\frac{17}{7}$ & $\begin{pmatrix}
 
 0 & \frac{5}{17} & \frac{1}{17} \\
 715139 & 220 & 46 \\
 848656 & 145 & 32 
  
\end{pmatrix}$ &\hyperlink{LY2-SII}{($b$)} \\ 
 2. &  $\frac{4}{7}$  &  $\frac{1}{7}$ & $\frac{3}{7} $ & $\begin{pmatrix}
 
 1 & 45954 & 2925 \\
 1 & -74 & -55 \\
 1 & -1702 & 325 \\
  
\end{pmatrix}$ &  \hyperlink{LY2-SI}{($a$)}\\
\rowcolor{mColor1} 3. & $\frac{164}{7}$ & $\frac{13}{7}$ & $\frac{11}{7}$ & $\begin{pmatrix}
 
 41& 5  & 17 \\
 50922 & -10 & -782  \\
 4797 & -11& 221  \\

\end{pmatrix}$ &\hyperlink{LY2-SI}{($a$)}\\
 4. & $\frac{12}{7}$  &  $\frac{2}{7}$ & $\frac{3}{7} $ & $\begin{pmatrix}
 
 6 & 22990 & 3510 \\
 3 & -132 & -117 \\
 2 & -627 & 378 \\
  
\end{pmatrix}$ &\hyperlink{LY2-SII}{$(b)$}\\
\rowcolor{mColor1} 5. & $\frac{156}{7}$ & $\frac{12}{7}$ & $\frac{11}{7} $ & $\begin{pmatrix}
 
 78  & 9& 10 \\
 27170 & -36 & -285 \\
 5070 & -39  & 210 \\ 
  
\end{pmatrix}$ &\hyperlink{LY2-SII}{$(b)$}\\

6. & $\frac{44}{7}$  &  $\frac{5}{7}$ & $\frac{4}{7} $ &$\left(
\begin{array}{ccc}
 88 & 1564 & 1972 \\
 44 & -184 & -725 \\
 11 & -138 & 348 \\
\end{array}
\right)$ & \hyperlink{LY2-SII}{$(b)$}\\
\rowcolor{mColor1} 7. & $\frac{124}{7}$ & $\frac{9}{7}$ & $\frac{10}{7} $ & $\begin{pmatrix}
 
 248 & 76 & 13 \\
 2108 & -152 & -78 \\
 2108 & -475 & 156 \\
  
\end{pmatrix}$ &\hyperlink{LY2-SII}{$(b)$}\\
8.  & $\frac{52}{7}$  &  $\frac{6}{7}$ & $\frac{4}{7} $ & $\left(
\begin{array}{ccc}
 156 & 475 & 2108 \\
 78 & -152 & -2108 \\
 13 & -76 & 248 \\
\end{array}
\right)$ &\hyperlink{LY2-SI}{($a$)}\\
\rowcolor{mColor1} 9. & $\frac{116}{7}$ & $\frac{8}{7}$ & $\frac{10}{7}$ & $\begin{pmatrix}
 
 348 & 138 & 11 \\
 725 & -184 & -44 \\
 1972 & -1564 & 88 \\
  
\end{pmatrix}$ &\hyperlink{LY2-SI}{($a$)}\\
  10.  & $\frac{60}{7}$  &  $\frac{8}{7}$ & $\frac{3}{7} $ & $\begin{pmatrix}
 
 210 & 39 & 5070 \\
 285 & -36 & -27170 \\
 10 & -9 & 78 \\
  
\end{pmatrix}$ & \hyperlink{LY2-SI}{($a$)}\\ 
\rowcolor{mColor1} 11. & $\frac{108}{7}$ & $\frac{6}{7}$ & $\frac{11}{7}$ & $\begin{pmatrix}
 
 378 & 627 & 2 \\
 117 & -132 & -3 \\
 3510 & -22990 & 6 \\
  
\end{pmatrix}$ &\hyperlink{LY2-SI}{($a$)}\\
 12. & $\frac{68}{7}$  &  $\frac{9}{7}$ & $\frac{3}{7} $ & $\left(
\begin{array}{ccc}
 221 & 11 & 4797 \\
 782 & -10 & -50922 \\
 17 & -5 & 41 \\
\end{array}
\right)$ & \hyperlink{LY2-SII}{$(b)$}\\ 
\rowcolor{mColor1} 13. & $\frac{100}{7}$ & $\frac{5}{7}$ & $\frac{11}{7}$ & $ \begin{pmatrix} 
 325 & 1702 & 1 \\
 55 & -74 & -1 \\
 2925 & -45954 & 1 
\end{pmatrix}$ &\hyperlink{LY2-SII}{$(b)$}\\
14. & $\frac{100}{7}$ & $\frac{12}{7}$ & $\frac{4}{7}$ & $ \begin{pmatrix}

 380 & -1 & 1247 \\
 11495 & 1 & -260623 \\
 55 & -1 & -129 \\
  
\end{pmatrix}$ &\hyperlink{LY2-SII}{$(b)$}\\
\bottomrule
\caption{$(3,0)$ models of type $LY_2$. There are two kinds of $S$-matrices.}
\label{tab(3,0)LY2-2}
\end{longtable}

\begin{longtable}{c||ccc|c|c} \toprule
S. No.  & $c$& $h_1$ &$h_2$&$\mathcal{Y}$ matrix &S matrix\\  
\midrule
 
\rowcolor{mColor2} VI.& \multicolumn{4}{c}{Type $(LY_1)^{\otimes 2}$} & \\ 
& \multicolumn{4}{c}{Multiplicities: $m_1=1,\ m_2=$2} &\\
    & \multicolumn{2}{l}{S matrix:}& \multicolumn{3}{l}{$\frac{1}{\sqrt{5}} \begin{pmatrix}
 \frac{1}{2} \left(\sqrt{5}+1\right) & \frac{1}{2} \left(\sqrt{5}-1\right) & 2 \\
 \frac{1}{2} \left(\sqrt{5}-1\right) & \frac{1}{2} \left(\sqrt{5}+1\right) & -2 \\
  1 & -1 & -1 
\end{pmatrix}$\hypertarget{LY1s-SI}{(a)}}\\
& \multicolumn{2}{l}{~}& \multicolumn{3}{l}{$\frac{1}{\sqrt{5}} \begin{pmatrix}
 \frac{1}{2} \left(\sqrt{5}-1\right) & \frac{1}{2} \left(\sqrt{5}+1\right) & 2 \\
 \frac{1}{2} \left(\sqrt{5}+1\right) & \frac{1}{2} \left(\sqrt{5}-1\right) & -2 \\
  1 & -1 & 1 
\end{pmatrix}$\hypertarget{LY1s-SII}{(b)}} \\
   \cmidrule(lr){2-6} 

\rowcolor{mColor1}  & $\frac{164}{5}$ & $ \frac{12}{5}$ &  $\frac{11}{5}$ & $\left(
\begin{array}{ccc}
 0 & \frac{1}{11} & \frac{10}{11} \\
 615164 & 28 & 220 \\
 254200 & 25 & 224
\end{array}
\right)$ &\hyperlink{LY1s-SI}{(a)}\\
 & $\frac{4}{5}$  &  $\frac25$ &  $\frac15 $ & $\begin{pmatrix}

 2 & 3249 & 47500 \\
 1 & 380 & -1250 \\
 1 & -57 & -130 

\end{pmatrix}$ &\hyperlink{LY1s-SI}{(a)}\\
\rowcolor{mColor1}& $\frac{116}{5}$ & $\frac85$ & $\frac95 $ & $\begin{pmatrix}

 58 & 11 & 10 \\
 4959 & 220 & -30 \\
 27550 & -275 & -26

\end{pmatrix}$ &\hyperlink{LY1s-SI}{(a)}\\
 & $\frac{12}{5}$  &  $\frac15 $ &  $\frac35 $ & $\begin{pmatrix}

 3 & 42483 & 2550 \\
 3 & 27 & -50 \\
 5 & -2295 & 222 

\end{pmatrix}$ &\hyperlink{LY1s-SII}{(b)}\\
\rowcolor{mColor1} & $\frac{108}{5}$  &  $\frac95 $ &  $\frac75 $ & $\left(
\begin{array}{ccc}
 27 & 3 & 50 \\
 42483 & 3 & -2550 \\
 2295 & -5 & 222 
\end{array}
\right)$ &\hyperlink{LY1s-SII}{(b)}\\
  & $\frac{28}{5}$  &  $\frac45$ &  $\frac25 $ & $\begin{pmatrix}

 28 & 676 & 16250 \\
 49 & 104 & -5000 \\
 7 & -26 & 120 

\end{pmatrix}$ &\hyperlink{LY1s-SII}{(b)}\\
\rowcolor{mColor1} & $\frac{92}{5}$ & $\frac65$&$\frac85 $ & $\begin{pmatrix}

 92 & 119 & 10 \\
 1196 & 136 & -20 \\
 7475 & -1700 & 24

\end{pmatrix}$ &\hyperlink{LY1s-SII}{(b)}\\
  & $\frac{36}{5}$  &  $\frac35$ &  $\frac45 $ & $\begin{pmatrix}

 144 & 1452 & 1100 \\
 12 & 336 & -150 \\
 45 & -770 & -228 

\end{pmatrix}$ &\hyperlink{LY1s-SI}{(a)}\\
\rowcolor{mColor1} & $\frac{84}{5}$ & $\frac75$ &  $\frac65 $ & $\left(
\begin{array}{ccc}
 336 & 12 & 150 \\
 1452 & 144 & -1100 \\
 770 & -45 & -228 
\end{array}
\right)$ &\hyperlink{LY1s-SI}{(a)}
\\

& $\frac{44}{5}$  &  $\frac{2}{5}$ & $\frac65 $ & $\begin{pmatrix}

 220 & 4959 & 30 \\
 11 & 58 & -10 \\
 275 & -27550 & -26 \\

\end{pmatrix}$ &\hyperlink{LY1s-SI}{(a)}\\
\rowcolor{mColor1} & $\frac{76}{5}$ & $\frac85$ &  $\frac45 $ & $\begin{pmatrix}

 380 & 1 & 1250  \\
3249 & 2  & -47500 \\
 57  & -1 & -130 \\

\end{pmatrix}$ &\hyperlink{LY1s-SI}{(a)}\\
 & $\frac{52}{5}$  &  $\frac65$ &  $\frac35 $ & $\begin{pmatrix}

 104 & 49 & 5000 \\
 676 & 28 & -16250 \\
 26 & -7 & 120 \\

\end{pmatrix}$ &\hyperlink{LY1s-SII}{(b)}\\
\rowcolor{mColor1} & $\frac{68}{5}$ & $\frac45$ & $\frac75 $ & $\begin{pmatrix}

 136 & 1196 & 20 \\
 119 & 92 & -10 \\
 1700 & -7475 & 24 \\

\end{pmatrix}$ &\hyperlink{LY1s-SII}{(b)}\\
\bottomrule
\caption{$(3,0)$ models of type $(LY_1)^{\otimes 2}$.}
\end{longtable}

\subsubsection{Some observations}

It turns out that in some cases, one or both of the $\mathbf{Y}_i$ that we construct starting from a known RCFT vvmf $\mathbb{X}$ turn out to be the RCFT after some rearrangement and change of signs. We list a few of them.

Let $i=1$ or $2$ and $j\neq i,0$. Consider the exponents of the quasi-character, $(\alpha_0+1,\ \alpha_i-1, \alpha_j)$. Suppose that $\alpha_i-1$  is the lowest exponent and define
\begin{equation*}
    \hat{\mathbb{Y}}^{(i)}: \begin{pmatrix}
    q^{\alpha_i+1}(1+y_{ii}q+\cdots)\\\frac{|y_{1i}|} {y_{1i}}\ q^{\alpha_0+1}(y_{1i}+\mathcal{O}(q))\\ \frac{|y_{ji}|} {y_{ji}}q^{\alpha_j}(y_{ji}+\mathcal{O}(q)))
\end{pmatrix}
\end{equation*}
has $\hat{c}^{(i)}=c+24(1-h_i),\ \hat{h}^{(i)}_i=2-h_i,\ \hat{h}^{(i)}_j=1-h_i+h_j$. 
In the  cases given below, $\hat{\mathbb{Y}}^{(i)}$ is associated with another RCFT with the above $\hat{c}$ and $\hat{h}$:
\begin{enumerate}
\item In the $B_{r,1}$ RCFTs for $r\in[0,11]$, $\hat{\mathbb{Y}}^{(1)}= (\chi_1^{(1)},\chi_0^{(1)},-\chi_2^{(1)})^T$ has $\hat{c}^{(1)}=\frac{2r+25}{2},\ \hat{h}_1=\frac{3}{2},\ \hat{h}^{(1)}_2=\frac{2r+9}{16}\implies \widetilde{B_{11-r,1}}$ RCFT.
\item In the  $D_{r,1}$ RCFT's for $r\in[1,11]$, $\hat{\mathbb{Y}}^{(1)}= (\chi_1^{(1)},\chi_0^{(1)},-\chi_2^{(1)})^T$ has $\hat{c}^{(1)}=r+12,\ \hat{h}_1=\frac{3}{2},\ \hat{h}^{(1)}_2=\frac{r+4}{8}\implies \widetilde{D_{12-r,1}}$ RCFT.
\item One has $\hat{\mathbb{Y}}^{(2)}= (\chi_2^{(2)},-\chi_1^{(2)},\chi_0^{(2)})^T$ in $LY_2$ class and $\hat{\mathbb{Y}}^{(1)}= (\chi_1^{(1)},\chi_0^{(1)},-\chi_2^{(1)})^T$  in $LY_1^{\otimes2}$ class, maps to another example in the same class with appropriate values of $c$ and $h$
 \item In $LY_1^{\otimes 2},A_{2,1}^{\otimes2}, \text{ and }A_{4,1}$, if the diagonal entries of the $S$-matrix $S_{ii}>0$ and $m_i\neq1$, it agrees with theories with appropriate $\hat{c}$ and $\hat{h}$ up to some rescaling. \hl{If $S_{ii}<0$, the $\hat{Y}_i$ cannot be associated with an RCFT as one has $S_{00}>0$.}
\end{enumerate}

\subsection{Admissible characters}

In the previous sub-section, we obtained explicit formulae for quasi-characters starting from all known $(3,0)$ CFT listed in Das-Gowdigere-Mukhi\cite{Das:2022uoe}. The next step is to look for admissible characters that can be obtained using linear combinations of the quasi-characters along with the original $(3,0)$ theory. Our main goal is to list out all $(3,6)$ admissible characters obtained in this fashion. We will also obtain theories with $(3,12)$ and $(3,18)$ along the way. The explicit solutions are listed in the Tables given in Appendix \ref{36Admissible}.

\subsubsection{Admissible characters of type $U$}

It is easy to track the change in the Wronskian index of admissible characters. One has $\delta \ell = -6 \sum_a\delta \alpha_a$ -- in other words, the change in Wronskian idex is determined by the studying the leading exponents of the admissible character. Starting with
an RCFT with exponents $(\alpha_1,\alpha_2,\alpha_3)$, one has
\begin{equation}
\mathbb{U}_2(b_1,b_2) = \begin{pmatrix}
q^{\alpha_0} (1 + O(q)) \\
q^{\alpha_1-1} (b_1 + p_{1,1}(b_1,b_2)\, q + p_{1,2}(b_1,b_2)\,q^2+O(q^3))\\
q^{\alpha_2-1} (b_2 + p_{2,1}(b_1,b_2)\, q + p_{2,2}(b_1,b_2)\,q^2+O(q^3))
\end{pmatrix}\quad,
\end{equation}
where $p_{i,j}$ are linear polynomials in $(b_1,b_2)$. For admissible solutions, one imposes the conditions
$b_i\geq0$ and $p_{i,j}\geq0$. Since $ \sum_a\delta \alpha_a=-2$, we see that $\delta\ell=12$. Thus, starting from a $(3,0)$ theory, an admissible character obtained as $\mathbb{U}_2$ will have Wronskian index $12$.

Next consider the case of $\mathbb{U}_{1,1}$, where $b_2=0$ with $b_1>0$, then one has $\sum_a\delta \alpha_a = -1$  leading to an admissible character with Wronskian index $6$ when
$p_{2,1}(b_1,0)>0$. However, if $p_{2,1}(b_1,0)=0$ for an integral value of $b_1$, then we obtain an admissible character with vanishing Wronskian index.

The set of admissible characters of type $U$ are integral points that lie in the interior of a bounded convex polytope. Interior points are  $(3,12)$ characters, while those on the boundaries $b_1=0$ and $b_2=0$ will be typically $(3,6)$ while those at the corners of the polytope (if they are integral points) will be $(3,0)$. This is illustrated with an example in Fig. \ref{UExample}.

\begin{figure}[ht]
\centering
\includegraphics[scale=0.7]{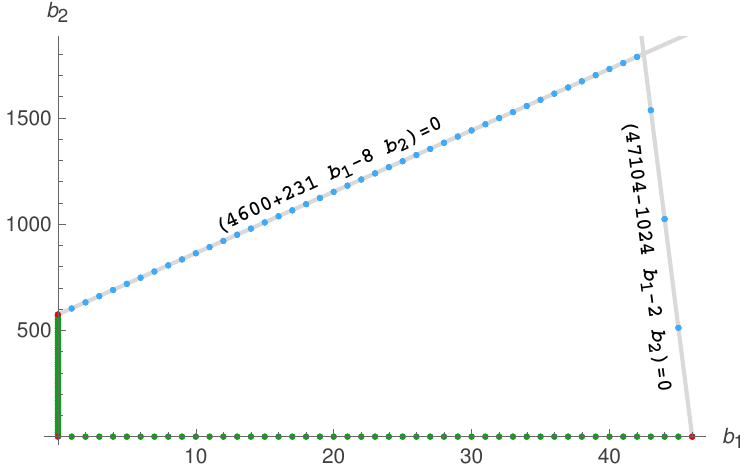}
\caption{Example of type $U$ admissible characters starting with a $c = 23$, $h_1= 3/2$, $h_2=15/8$ RCFT. They occur as integral points on the interior and boundary of a quadrilateral. We see two other $(3,0)$ theories at the corners $(b_1,b_2)=(46,0)$ and
$(b_1,b_2)=(0,575)$. These are shown as red dots and connected by green dots which are $(3,6)$ admissible characters. We show the $(3,12)$ points that lie on the boundaries as blue dots. All integral interior points are $(3,12)$ admissible characters and are not shown in the figure to avoid clutter.}\label{UExample}
\end{figure}

\subsubsection{Admissible characters of type $W$}

\begin{equation}
\mathbb{W}_2(b,c_1,c_2) = \begin{pmatrix}
q^{\alpha_0-1} (1 + b\, q +O(q^2)) \\
q^{\alpha_1-1} (c_1 + p_{1,1}(b,c_1,c_2)\, q + p_{1,2}(b_1,b_2)\,q^2+O(q^3))\\
q^{\alpha_2-1} (c_2 + p_{2,1}(b,c_1,c_2)\, q + p_{2,2}(b,c_1,c_2)\,q^2+O(q^3))
\end{pmatrix}\quad,
\end{equation}
We list the various possibilities that depend on specific choices for the free parameters $(b,c_1,c_2)$. 
\begin{enumerate}
\item
Since $ \sum_a\delta \alpha_a=-3$, we have $\delta \ell=18$ when $c_1,c_2\neq 0$. 
These are of type $\mathbb{W}_2$.
\item However, when one of $c_1$ or $c_2$ vanishes, one has $\delta \ell=12$ generically. However, when $c_2=0$ and $p_{2,1}(b,c_1,c_2=0)=0$, we obtain $\delta \ell=6$. These are of type $\mathbb{W}_{1,1}$	.
\item  If additionally, we can set $p_{2,2}(b,c_1,c_2=0)=0$, we can obtain theories with $\delta \ell=0$. This happens only in a limited set of examples.
\item If $c_1=c_2=0$, we obtain admissible solutions for $b\geq b_{\text{min}}$. These are of type $\mathbb{W}_0$ are generically with $\delta \ell=6$.
\end{enumerate}
An infinite number of admissible theories appear inside a non-compact polytope in the space $(b,c_1,c_2)$ with faces given by $c_1=0$, $c_2=0$, $b=b_{\text{min}}$ and $p_{i,m}=0$. This is unlike those of type $U$ where we obtained only a finite number of admissible solutions as the polytope was compact.

\begin{figure}[ht]
\centering
\includegraphics[scale=1]{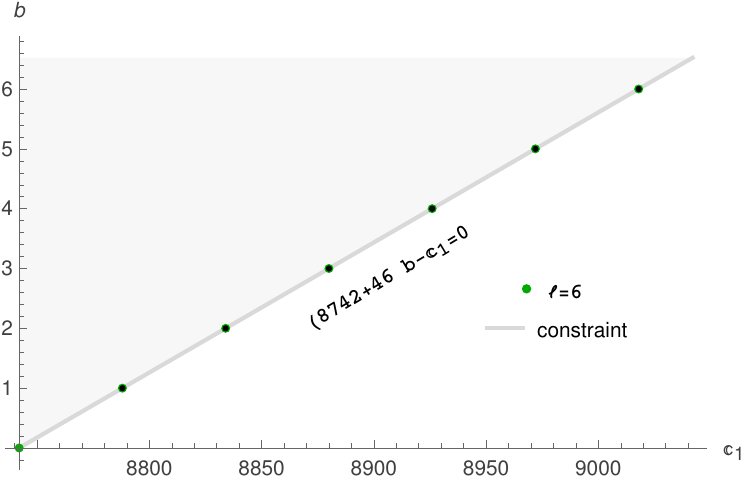}
\caption{In this type $\mathbb{W}_{1,1}$ example, we obtain $(3,6)$ theories on integral points that lie on the line $\frac1{46}(c_1-8742)$. All integral points inside the polytope and on the $c_1=0$ line are $(3,12)$ admissible solutions. There are no $(3,0)$ solutions.}
\end{figure}

\subsubsection{Admissible characters of type $V$}

\begin{equation}
\mathbb{V}_2 = \begin{pmatrix}
q^{\alpha_0-2} (1 + b_1\, q + b_2 q^2 +O(q^3)) \\
q^{\alpha_1-1} (s_1 + p_{1,1}(b,s_1,s_2)\, q + p_{1,2}(b_1,b_2)\,q^2+O(q^3))\\
q^{\alpha_2-1} (s_2 + p_{2,1}(b,s_1,s_2)\, q + p_{2,2}(b,s_1,s_2)\,q^2+O(q^3))
\end{pmatrix}\quad,
\end{equation}
\begin{enumerate}
\item When $s_1,s_2>0$, we have $\delta \ell=24$. This is of type $\mathbb{V}_2$.
\item When one of $(s_1,s_2)$ is zero with the other being non-zero, we have $\delta \ell=18$. When $s_1>0$, we look for additional conditions on the constants $(s_2,b_1,b_2)$ such that $p_{2,1}=0$ which will lead to $\delta \ell =12$. One can get $\delta \ell =6,0$ at higher co-dimension.
\item When $s_1=s_2=0$, we obtain $\delta \ell =12$. However, by imposing vanishing conditions, at codimension two, we can obtain $\delta \ell=0$.
\end{enumerate}

\hl{
\noindent\textbf{Remarks:} 
\begin{enumerate}
\item Solutions of type $U$ have two positive parameters. For sufficiently large values of these parameters, the solution will \textit{not} be admissible. So admissible solutions lie on a polytope in a bounded region. We anticipate but have not proven that the polytope is convex. 
\item  Solutions of type $W$ (resp. $V$) has three (resp. four) positive parameters that are not naturally bounded. We anticipate but have not proven that admissible solutions lie on a unbounded convex polytope. 
\item The Wronskian index of the admissible solutions depends on the dimension of the \textit{face} of the polytope that it belongs to. Those that lie in the interior have the highest Wronskian index. The Wronksian index of admissible solutions lying on co-dimension $d$ face, will be smaller by $6d$. This follows trivially by looking at the leading exponents of the solution. Thus for admissible solutions of type $U$, we can have $\ell=12, 6,0$. For admissible solutions of type $W$,  we can have $\ell=18,12, 6,0$ and for type $V$,  we can have $\ell=,24,18,12, 6,0$.
\end{enumerate}
}

\section{$(3,3)$ theories}

We consider the $(3,3)$ admissible characters obtained using the MLDE method. For a $(3,3)$ CFT, one has $\sum_{i=0}^2\alpha_i=0$. Further, one has
\[
3a_1+ 2(b_1+2b_2)=12\ .
\]
$a_1=2$ and $b_1=b_2=1$ is the only indecomposable solution. For instance,  $a_1=0$ is not allowed as the $S$-matrix is then diagonal and hence the theory is a tensor product of three single-character theories. Thus one has $\det(S)=-1$ and $\det(U)=1$. Further, $\det(T)=1$ which is consistent with $\sum_i\alpha_i=0$. 

\subsection{A new construction of $(3,3)$ admissible characters}

Let $\Xi(\tau)$ represent the basis of VVMFs with a given multiplier system. Further, let $(\Lambda,\mathcal{Y})$ be the data associated with the matrix MLDE associated with $\Xi(\tau)$. Bantay and Gannon introduce the notion of a dual of $\Xi(\tau)$ that is defined by the involutive transformation\cite{Bantay:2007zz}
\begin{equation}
\Xi^\vee(\tau) := \frac{E_4(\tau)^2 E_6(\tau)}{\Delta(\tau)^{7/6}} \left(\Xi(\tau)^T\right)^{-1}\ .
\label{BFdual}
\end{equation}
The data associated with $\Xi^\vee(\tau)$ is the given by
\begin{equation}
\begin{split}
\Lambda^\vee &= -\frac{7}{6}I_n - \Lambda\ , \\
\mathcal{Y}^\vee &= 4 I_n -\mathcal{Y}^T \ .
\end{split}
\end{equation}
Let $(n,\ell)$ denote the rank and Wronskian index of $\Xi(\tau)$. Then $\Xi^\vee(\tau)$ has rank $n$ and Wronskian index given by
\begin{equation}
\ell^\vee = (n-5)(n-1)+7 -\ell\ .
\end{equation}
The $S$-matrix for $\Xi^\vee$ is given by $S^\vee=-(S^{-1})^T$, where $S$ is the $S$-matrix of the original theory. The minus sign arises from the one-dimensional character, $\frac{E_4^2 E_6}{\Delta^{7/6}}$. Next, let  $D=\text{Diag}(1,m_1,\ldots,m_{n-1})$ be the matrix of multiplicities of the original theory. Then, $D^\vee  \propto D^{-1}$, the proportionality constant is fixed by requiring the vacuum character to have multiplicity one.

This involution relates $(2,0)$ characters to $(2,4)$ theories and maps $(2,2)$ to itself. What is interesting is that this involution maps $(3,0)$ VVMF's to $(3,3)$ VVMFs, thus providing a new construction of $(3,3)$ admissible characters. Not all $(3,3)$ VVMF's obtained this way turn out to be admissible. However, \textbf{all} known $(3,3)$ theories arise in this fashion. An additional bonus of this construction is that we can determine the $S$-matrices of known $(3,3)$ admissible characters along with multiplicities that cannot be obtained from the MLDE approach. 

We illustrate this with example of the Ising model. A straightahead computation gives the following result (we show only the first few terms).
\[
\Xi^\vee =\begin{pmatrix}
 \frac{1}{q^{55/48}} + \frac{4}{q^{7/48}} & -\frac{1}{q^{7/48}}-3826 q^{41/48} & -\frac{1}{q^{7/48}} -53 q^{41/48} \\
 -2325 q^{17/48}-1785755 q^{65/48} & \frac{1}{q^{31/48}}-271 q^{17/48} & 25 q^{17/48}+246 q^{65/48}\\
 -94208 q^{19/24}-21581824 q^{43/24} & 4096 q^{19/24} +208896 q^{43/24}&\frac{1}{q^{5/24}}+27 q^{19/24}
\end{pmatrix}
\]
The VVMF corresponding to the third column becomes admissible on switching the sign of the first term. This is the $(3,3)$ admissible solution with $c=5$ and $h_1=1/16$ and $h_2=9/16$. One needs to permute the rows to match with solution S1 given in \cite{Gowdigere:2023xnm}. In computing the S-matrix, we need to account for the changes in sign and permutations. In Table \ref{S33}, we give the results for the $15$ $(3,3)$ admissible characters.

\renewcommand{\arraystretch}{1.2}
\begin{longtable}{c|ccc||ccc}
\toprule
&\multicolumn{3}{c||}{$(3,3)$ theory}&\multicolumn{3}{c}{Dual $(3,0)$ Theory}\\
S.No. &$c$& $h_1$&  $h_2$ & $\tilde{c}$& $\tilde{h}_1$ &  $\tilde{h}_2$ \\  
\midrule
\multicolumn{7}{c}{ $m_1=m_2=1$} \\
\multicolumn{7}{c}{$S:$ 
   $ \left(
\begin{array}{ccc}
 0 & \frac{1}{\sqrt{2}} & \frac{1}{\sqrt{2}} \\
 \frac{1}{\sqrt{2}} & -\frac{1}{2} & \frac{1}{2} \\
 \frac{1}{\sqrt{2}} & \frac{1}{2} & -\frac{1}{2} \\
\end{array}
\right)$\ ,}  \\
\midrule
  1& $5$ & $\frac{1}{16}$ & $\frac{9}{16}$ &$\begin{array}{c}
 \frac{1}{2}\\\frac{25}{2}\end{array}$&$\begin{array}{c}
 \frac{1}{2}\\\frac{3}{2}\end{array}$ &$\begin{array}{c}
 \frac{1}{16}\\\frac{9}{16}\end{array}$ \\
\rowcolor{mColor1}2& $7$ &$\frac{3}{16}$ & $\frac{11}{16}$ &$\begin{array}{c}
 \frac{3}{2}\\\frac{27}{2}\end{array}$&$\begin{array}{c}
 \frac{1}{2}\\\frac{3}{2}\end{array}$ &$\begin{array}{c}
 \frac{3}{16}\\\frac{11}{16}\end{array}$\\
 \midrule
 \multicolumn{7}{c}{ $m_1=m_2=2$} \\
 \rowcolor{mColor3} \multicolumn{7}{c}{$S:$  $\left(
\begin{array}{ccc}
 0 & 1 & 1 \\
 \frac{1}{2} & -\frac{1}{2} & \frac{1}{2} \\
 \frac{1}{2} & \frac{1}{2} & -\frac{1}{2} \\
\end{array}
\right)$}\\
\midrule
3& $6$ & $\frac18$ & $\frac58$ &$\begin{array}{c}
1\\13\end{array}$&$\begin{array}{c}
 \frac{1}{2}\\\frac{3}{2}\end{array}$&$\begin{array}{c}
 \frac{1}{8}\\\frac{5}{8}\end{array}$\\
\rowcolor{mColor1}4&  $8$ &$\frac{1}{4}$ & $\frac{3}{4}$ &$\begin{array}{c}
2\\14\end{array}$&$\begin{array}{c}
 \frac{1}{2}\\\frac{3}{2}\end{array}$&$\begin{array}{c}
 \frac{1}{4}\\\frac{3}{4}\end{array}$\\
 \midrule

\rowcolor{mColor3} \multicolumn{7}{c}{ $m_1=m_2=2$} \\
 \multicolumn{7}{c}{$S:$ $\left(
\begin{array}{ccc}
 \frac{1}{\sqrt{5}} & \frac{2}{\sqrt{5}} & \frac{2}{\sqrt{5}} \\
 \frac{1}{\sqrt{5}} & -\frac{1}{10} \left(\sqrt{5}+5\right) & \frac{1}{10} \left(5-\sqrt{5}\right) \\
 \frac{1}{\sqrt{5}} & \frac{1}{10} \left(5-\sqrt{5}\right) &- \frac{1}{10} \left(\sqrt{5}+5\right) \\
\end{array}
\right)$}\\
\midrule
  5& $8$ &$\frac15$ &$\frac45$ &$\begin{array}{c}
 \frac{4}{5}\\\frac{84}{5}\end{array}$&$\begin{array}{c}
 \frac{1}{5}\\\frac{6}{5}\end{array}$ &$\begin{array}{c}
 \frac{2}{5}\\\frac{7}{5}\end{array}$ \\ 
\rowcolor{mColor1}6& 16 &$\frac45$ &$\frac65$ &$\begin{array}{c}
 \frac{36}{5}\\\frac{76}{5}\end{array}$&$\begin{array}{c}
 \frac{4}{5}\\\frac{4}{5}\end{array}$ &$\begin{array}{c}
 \frac{3}{5}\\\frac{8}{5}\end{array}$\\
  7& 24&$\frac15$ &$\frac45$ &$\begin{array}{c}
 \frac{44}{5}\\\frac{116}{5}\end{array}$&$\begin{array}{c}
 \frac{6}{5}\\\frac{9}{5}\end{array}$ &$\begin{array}{c}
 \frac{2}{5}\\\frac{8}{5}\end{array}$ \\ 
\rowcolor{mColor1}8& 32 &$\frac45$ &$\frac65$&$\frac{76}{5}$&$\frac{9}{5}$ &$ \frac{3}{5}$ \\
\midrule
\rowcolor{mColor3}  \multicolumn{7}{c}{ $m_1=m_2=1$} \\
 \multicolumn{7}{c}{$S:$ $\frac{2}{\sqrt{7}}\left(
\begin{array}{ccc}
 \sin \left(\frac{\pi }{7}\right) & \cos \left(\frac{\pi }{14}\right) & \cos \left(\frac{3 \pi }{14}\right) \\
 \cos \left(\frac{\pi }{14}\right) & -\cos \left(\frac{3 \pi }{14}\right) & \sin \left(\frac{\pi }{7}\right) \\
 \cos \left(\frac{3 \pi }{14}\right) & \sin \left(\frac{\pi }{7}\right) & -\cos \left(\frac{\pi }{14}\right) \\
\end{array}
\right)$}\\
\midrule 
9 & $\frac{48}{7}$ &$\frac{5}{7}$ &$\frac{1}{7}$ &$\begin{array}{c}
 \frac{4}{7}\\\frac{100}{7}\end{array}$&$\begin{array}{c}
 \frac{1}{7}\\\frac{5}{7}\end{array}$ &$\begin{array}{c}
 \frac{3}{7}\\\frac{11}{7}\end{array}$ \\
\rowcolor{mColor1}10& $\frac{64}{7}$ &$\frac{2}{7}$ &$\frac{6}{7}$ &$\begin{array}{c} \frac{12}{7} \\ \frac{108}{7}\end{array}$&$\begin{array}{c}
 \frac{2}{7}\\\frac{6}{7}\end{array}$ &$\begin{array}{c}
 \frac{3}{7}\\\frac{11}{7}\end{array}$ \\
11& $\frac{104}{7}$ &$\frac{5}{7}$ &$\frac{8}{7}$ &$\begin{array}{c}
 \frac{44}{7}\\\frac{116}{7}\end{array}$&$\begin{array}{c}
 \frac{4}{7}\\\frac{8}{7}\end{array}$ &$\begin{array}{c}
 \frac{5}{7}\\\frac{10}{7}\end{array}$\\
\rowcolor{mColor1}12& $\frac{120}{7}$ &$\frac{9}{7}$ &$\frac{6}{7}$ &$\begin{array}{c}
 \frac{52}{7}\\\frac{124}{7}\end{array}$&$\begin{array}{c}
 \frac{4}{7}\\\frac{9}{7}\end{array}$ &$\begin{array}{c}
 \frac{6}{7}\\\frac{10}{7}\end{array}$ \\
\bottomrule	
13& $\frac{160}{7}$ &$\frac{12}{7}$ &$\frac{8}{7}$ &$\begin{array}{c}
 \frac{60}{7}\\\frac{156}{7}\end{array}$&$\begin{array}{c}
 \frac{3}{7}\\\frac{11}{7}\end{array}$ &$\begin{array}{c}
 \frac{8}{7}\\\frac{12}{7}\end{array}$ \\
\rowcolor{mColor1}14& $\frac{176}{7}$ &$\frac{9}{7}$ &$\frac{13}{7}$ &$\begin{array}{c}
 \frac{68}{7}\\\frac{164}{7}\end{array}$&$\begin{array}{c}
 \frac{3}{7}\\\frac{11}{7}\end{array}$ &$\begin{array}{c}
 \frac{9}{7}\\\frac{13}{7}\end{array}$\\
15& $\frac{216}{7}$ &$\frac{12}{7}$ &$\frac{15}{7}$ &$\frac{100}{7}$&$
 \frac{4}{7}$ &$\frac{12}{7}$\\

\bottomrule	
\caption{We provide the S-matrix and multiplicities for all known $(3,3)$ admissible characters. The third column gives the dual $(3,0)$ theories -- there can be more than one when $\Xi$ leads to multiple admissible characters. }
\label{S33}
\end{longtable}

Now that we know the S-matrix of the $(3,3)$ admissible characters, we can see if they can represent RCFTs. 
The next step is to determine the fusion rules and see if they lead to positive, integral coefficients. For the cases when the multipicities are one, it is somewhat straightforward. Even in these cases, we have consider two possibilities -- the theory is unitary or non-unitary. To account for non-unitary theories, we allow for the identity operator to be the operator with smallest dimension. We find the following:
\begin{enumerate}
\item For items 1 and 2 in Table \ref{S33}, the theories cannot be unitary as $S_{00}=0$. Allowing for non-unitary theories leads to fusion coefficients being negative. Thus, these two cannot be RCFTs.
\item For items 9-15, we find acceptable fusion rules compatible with a unitary RCFT. The fusion rules are given in Table \ref{fusionrule}.
\begin{table}[h]
\centering
\begin{tabular}{c||c|c|c}
& $1$ & $\phi_1$ & $\phi_2$ \\  \hline\hline
$1$ & $1$ & $\phi_1$ & $\phi_2$ \\
$\phi_1$ & $\phi_1$ & $1+\phi_1+\phi_2$ & $\phi_1+\phi_2$ \\ 
$\phi_2$ & $\phi_2$ & $\phi_1+\phi_2$ & $1+\phi_1$ \\ \hline
\end{tabular}
\caption{The three primary operators are labelled $(1,\phi_1,\phi_2)$ and the fusion rules are given as a multiplication table.}\label{fusionrule}
\end{table}
\item
We next consider the remaining theories. The two familes of $S$-matrices needs to be first resolved into $5\times 5$ matrices. There are three undetermined coefficients which are determined by the condition that $S_{\text{resolved}}\cdot S_{\text{resolved}}=C$, where $C$ satisfies $C^2=1$ with $C$ being the identity matrix being the simplest possibility. It turns out that \textit{none} of the solutions lead to \hl{non-negative fusion} coefficients. Thus, we believe that these are not associated with RCFTs.
\end{enumerate}

\noindent \textbf{Remarks:} \\
1. Using a different method, Gowdigere-Kala-Rawat have obtained the $S$-matrix and fusion coefficients for the $15$ admissible characters\cite{GKR}. Their results are consistent with the one discussed here.\\
2. Consider the product theories $E_{7,1}\otimes A_{1,1}$ and $E_{7\frac{1}{2}}\otimes LY_{1}$ with $c=8$ and $h_i=(0,\frac{1}{4},\frac{3}{4},1),\ (0,\frac{1}{5},\frac{4}{5},1)$. The three-character theory can be obtained from these four-character RCFTs as follows:
\begin{equation*}
    \mathbb{X}(\tau)=(\chi_0-\chi_3,\chi_1,\chi_2)^T\ .
\end{equation*} 
They correspond to $(3,3)$ theories with $c=8$ and $h_i=(0,\frac{1}{4},\frac{3}{4}),\ (0,\frac{1}{5},\frac{4}{5})$ respectively. The minus sign in the linear combination $\chi_0-\chi_3$ shows that it cannot be an RCFT\footnote{We thank Kaiwen Sun for pointing out these examples to us.} .

\subsection{$(3,3)$ Quasi-characters and Admissible characters from VVMFs}

Column 3 of  Table \ref{tab_33} provides the $q$-series of the $(3,3)$ admissible characters. The important addition we provide is the degeneracies that we obtain through the duality. For instance, in item 2, the third character has degeneracy $54$ \hl{instead of 27 obtained from integrality constraints}. In column 4 of Table \ref{tab_33}, we list the $\mathcal{Y}$ matrices corresponding to the 15 $(3,3)$ solutions obtained using the MLDE method in \cite{Gowdigere:2023xnm}. The quasi-characters $\mathbb{Y}_1$ and $\mathbb{Y}_2$ can be obtained by solving the matrix modular differential equation of Gannon\cite{Gannon:2013jua}. Once these are obtained, we look for admissible characters of types $U$, $W$ and $V$. The last column of Table \ref{tab_33}, lists the range of values $0\leq b\leq b_{\text{min}}$ for which we obtain $(3,9)$ admissible characters of type $W_0$. Notice that there are cases where we obtain \textit{no} solutions. Table \ref{tabU_33} lists $(3,9)$ admissible characters of type $U_i$, Table \ref{tabW_33} lists $(3,9)$ admissible characters of type $W_{1,i}$ and Table \ref{tabV_33} lists $(3,9)$ admissible characters of type $V_0$ that arise from the 7 solutions that are potentially RCFTs.

\clearpage

\begin{longtable}{c||ccc|c|ll}  

\toprule
&&&&&\\
\multirow{-2}{11pt}{Sl. No.}  & $c$& $h_1$ &  $h_2$& $\mathbb{X}$ & $\mathcal{Y}$ matrix & bmin$_{2}$\\  
\midrule
\rowcolor{mColor1}  1& $5$ & $\frac{1}{16}$ & $\frac{9}{16}$ &$\begin{pmatrix}q^{-\frac{5}{24}}(1 + 27 q + 106 q^2+\dots)\\q^{-\frac{7}{48}}(1 + 53 q + 404 q^2+\dots)\\q^{\frac{17}{48}}(25 + 246 q + 1297 q^2+\dots )\end{pmatrix}$&$\left(
\begin{array}{ccc}
 27 & 94208 & 4096 \\
 1 & 4 & 1 \\
 25 & 2325 & -271 \\
\end{array}
\right)$ &-- \\
\rowcolor{mColor1}2& $7$ &$\frac{3}{16}$ & $\frac{11}{16}$&$\begin{pmatrix}q^{-\frac{7}{24}}(1 + 25 q + 53 q^2+\dots)\\2q^{-\frac{5}{48}}(1 + 106 q + 1214 q^2+\dots)\\2q^{\frac{19}{48}} (27 + 433 q + 3400 q^2+\dots )\end{pmatrix}$& $\left(
\begin{array}{ccc}
 25 & 43008 & 2048 \\
 2 & 1 & 3 \\
 54 & 2871 & -266 \\
\end{array}
\right)$ &-- \\
3& $6$ & $\frac18$ & $\frac58$ &$\begin{pmatrix}q^{-\frac{1}{4}}(1 + 26 q + 79 q^2+\dots)\\q^{-\frac{1}{8}}(1 + 79 q + 755 q^2+\dots)\\2q^{\frac{3}{8}} (13 + 163 q + 1053 q^2+\dots )\end{pmatrix}$ &$\left(
\begin{array}{ccc}
 26 & 90112 & 4096 \\
 1 & 3 & 2 \\
 26 & 2600 & -269 \\
\end{array}
\right)$ & --\\

4& $8$ &$\frac14$ &$\frac34 $& $\begin{pmatrix}q^{-\frac{1}{3}}(1 + 24 q + 28 q^2+\dots)\\2q^{-\frac{1}{12}}(1 + 134 q + 1809 q^2+\dots)\\8 q^{\frac{5}{12}} (7 + 142 q + 1329 q^2+\dots )\end{pmatrix}$ & $\left(
\begin{array}{ccc}
 24 & 40960 & 2048 \\
 2 & -2 & 4 \\
 56 & 3136 & -262 \\
\end{array}
\right)$& --\\

\rowcolor{mColor1}5 & $8$ &$\frac15$ &$\frac45$& $\begin{pmatrix}q^{-\frac{1}{3}}(1 + 134 q + 1920 q^2+\dots)\\q^{-\frac{2}{15}}(1 + 190 q + 2832 q^2+\dots)\\q^{\frac{7}{15}} (57 + 1159 q + 10526q^2+\dots )\end{pmatrix}$ &$\left(
\begin{array}{ccc}
 134 & 47500 & 1250 \\
 1 & 2 & 1 \\
 57 & 3249 & -376 \\
\end{array}
\right)$&$3 ^*$\\

\rowcolor{mColor1}6& $ 16$ &$\frac45$ & $\frac65$& $\begin{pmatrix}q^{-\frac{2}{3}}(1 + 232 q + 31076 q^2+\dots)\\5 q^{-\frac{2}{15}}(9 + 2792 q + 101678 q^2+\dots)\\5 q^{\frac{8}{15}} (154 + 13264 q + \dots )\end{pmatrix}$&$\left(
\begin{array}{ccc}
 232 & 1100 & 150 \\
 45 & -140 & 12 \\
 770 & 1452 & -332 \\
\end{array}
\right)$&0 \\

\rowcolor{mColor1}7& $ 24$ &$\frac65$ & $\frac95$ & $\begin{pmatrix}q^{-1}(1 + 30 q + 87786 q^2+\dots)\\25q^{\frac{1}{5}}(11 + 10212 q +\dots)\\25q^{\frac{4}{5}}(1102 + 166953 q +\dots )\end{pmatrix}$&$\left(
\begin{array}{ccc}
 30 & 30 & 10 \\
 275 & -216 & 11 \\
 27550 & 4959 & -54 \\
\end{array}
\right)$&0 \\

\rowcolor{mColor1}8& $ 32$ &$\frac95$ & $\frac{11}{5}$& $\begin{pmatrix}q^{-\frac{4}{3}}(1 + 3 q + 62500  q^2+\dots)\\625q^{\frac{7}{15}}(19 + 17100 q+\dots)\\625q^{\frac{13}{15}}(434 + 135997 q +\dots )\end{pmatrix}$  &$\left(
\begin{array}{ccc}
 3 & -1 & 1 \\
 11875 & -433 & 57 \\
 271250 & 682 & 190 \\
\end{array}
\right)$&0 \\

9& $\frac{48}{7}$ & $\frac17$ &$\frac57$ &$\begin{pmatrix}q^{-\frac{2}{7}}(1 + 78 q + 784  q^2+\dots)\\q^{-\frac{1}{7}}(1 + 133 q + 1618 q^2+\dots)\\q^{\frac{3}{7}} (55 + 890 q + 6720 q^2+\dots )\end{pmatrix}$& $\left(
\begin{array}{ccc}
 78 & 45954 & 1702 \\
 1 & 3 & 1 \\
 55 & 2925 & -321 \\
\end{array}
\right)$ &$3^*$ \\

10& $ \frac{64}{7}$ &$\frac27$ &$\frac67$ & $\begin{pmatrix}q^{-\frac{8}{21}}(1 + 136 q + 2417 q^2+\dots)\\q^{-\frac{2}{21}}(3 + 632 q + 10787 q^2+\dots)\\q^{\frac{10}{21}} (117 + 2952 q + 32220q^2+\dots )\end{pmatrix}$ & $\left(
\begin{array}{ccc}
 136 & 22990 & 627 \\
 3 & -2 & 2 \\
 117 & 3510 & -374 \\
\end{array}
\right)$&2\\
11& $ \frac{104}{7}$ &$\frac57$ &  $\frac87$ & $\begin{pmatrix}q^{-\frac{13}{21}}(1 + 188 q + 17260 q^2+\dots)\\q^{\frac{2}{21}}(44 + 13002 q + 424040 q^2\dots)\\q^{\frac{11}{21}}(725 + 52316 q+ \dots )\end{pmatrix}$  & $\left(
\begin{array}{ccc}
 188 & 1564 & 138 \\
 44 & -84 & 11 \\
 725 & 1972 & -344 \\
\end{array}
\right)$ & 0\\

12& $ \frac{120}{7}$ &$\frac67$ &  $\frac97 $ & $\begin{pmatrix}q^{-\frac{5}{7}}(1 + 156 q + 28926 q^2+\dots)\\q^{\frac{1}{7}}(78 + 28692 q + 1194804 q^2+\dots)\\q^{\frac{4}{21}}(2108 + 200787 q + \dots )\end{pmatrix}$ &$\left(
\begin{array}{ccc}
 156 & 475 & 76 \\
 78 & -152 & 13 \\
 2108 & 2108 & -244 \\
\end{array}
\right)$&0\\
13& $ \frac{160}{7}$ & $\frac87$ & $\frac{12}{7}$  & $\begin{pmatrix}q^{-\frac{20}{21}}(1 + 40 q + 60440 q^2+\dots)\\q^{\frac{4}{21}}(285 + 227848 q +\dots)\\q^{\frac{16}{21}}(27170 + 3857360 q +\dots )\end{pmatrix}$ & $\left(
\begin{array}{ccc}
 40 & 39 & 9 \\
 285 & -206 & 10 \\
 27170 & 5070 & -74 \\
\end{array}
\right)$&0 \\

14& $ \frac{176}{7}$ & $\frac97$ & $\frac{13}{7} $  & $\begin{pmatrix}q^{-\frac{22}{21}}(1 + 14 q + 66512 q^2+\dots)\\q^{\frac{5}{21}}(782 + 718267 q +\dots)\\q^{\frac{17}{21}}(50922 + 8656740 q +\dots )\end{pmatrix}$ &  $\left(
\begin{array}{ccc}
 14 & 11 & 5 \\
 782 & -217 & 17 \\
 50922 & 4797 & -37 \\
\end{array}
\right)$ & 1\\
15& $ \frac{216}{7}$ &$\frac{12}{7}$ & $\frac{15}{7} $& $\begin{pmatrix}q^{-\frac{9}{7}}(1 + 3 q + 52254 q^2+\dots)\\q^{\frac{3}{7}}(11495 + 10341870 q +\dots)\\q^{\frac{6}{7}}(260623 + 74348634 q +\dots )\end{pmatrix}$  & $\left(
\begin{array}{ccc}
 3 & -1 & 1 \\
 11495 & -376 & 55 \\
 260623 & 1247 & 133 \\
\end{array}
\right)$&0\\

\bottomrule		
\caption{The list of (3,3) admissible solutions. Column 3 gives the $q$-series along with degeneracies that follow from the duality with $(3,0)$ theories. The last column indicates the minimum value of $b$ for which we obtain $(3,9)$ admissible solutions.  In four examples, we indicate the absence of $(3,9)$ admissible solutions by a dash. For two cases,  we obtain other $(3,3)$ solutions when $b=b_{\text{min}}$ and $(3,9)$ solutions when $b>b_{\text{min}}$ . These are indicated by adding a ${}^*$ to the quoted value of $b_{\text{min}}$. } 
\label{tab_33}
\end{longtable}

\begin{table}[H]  
\centering
\aboverulesep = 0pt
\belowrulesep = 0pt
\begin{tabular}{c|ccc|ccc}\toprule

S. No.  & $c$& $h_1$ &  $h_2$  & $b_i$& $h_1$ &$h_2$  \\  
\midrule
  1& $ \frac{104}{7}$ & $\frac57$ & $\frac87$  & $b_2\leq 2$& $\frac{5}7$&$\frac{1}7$\\
\rowcolor{mColor1}2& $ \frac{120}{7}$ & $\frac67$ & $\frac97 $ & $b_2\leq 8$& $\frac67$ &$\frac27 $\\
3& $ \frac{160}{7}$ & $\frac87$ &$\frac{12}{7}$   & $b_1\leq 1$& $\frac17$&$\frac{12}{7}$   \\
& & & & $b_2\leq 367$& $\frac87$ &$\frac{5}{7}$\\
 \rowcolor{mColor1}4& $ \frac{176}{7}$ & $\frac97$ & $\frac{13}{7} $ & $b_1\leq 3$& $\frac27$&$\frac{13}{7} $ \\
\rowcolor{mColor1}& &&& $b_2\leq 1376$& $\frac97$ &$\frac{6}{7} $\\
 5& $ \frac{216}{7}$ & $\frac{12}{7}$ & $\frac{15}{7} $ & $b_1\leq 3$& $\frac{12}{7}$ &$\frac{15}{7} $ \\
& &&& $b_2>0$& $\frac{12}{7}$ &$\frac{8}{7} $\\
 \bottomrule

\end{tabular}
\caption{$(3,9)$ admissible solutions of type $U_i$  using $(3,3)$ solutions .}
\label{tabU_33}
\end{table}

\begin{table}[H]  
\centering
\aboverulesep = 0pt
\belowrulesep = 0pt
\begin{tabular}{c|ccc|cc|ccc}\toprule

S. No.  & $c$& $h_1$ &  $h_2$  & $c_i$& $b$&  $c$&$h_1$ &$h_2$  \\  
\midrule
1.& $\frac{48}{7}$ & $\frac17$ &$\frac57$    & $c_2=3-b$& $0 \leq b <3^*$&  $\frac{216}{7}$&$\frac{15}{7}$&$\frac{5}{7}$\\
\rowcolor{mColor1}2.& $ \frac{64}{7}$ &$\frac27$ &$\frac67$ & $c_2=2$& 0&  $\frac{232}{7}$&$\frac{16}{7}$&$\frac{6}{7}$\\
\bottomrule

\end{tabular}
\caption{$(3,9)$ admissible solutions of type $W_{1,i}$ from the $(3,3)$ models.}
\label{tabW_33}
\end{table}

\begin{table}[H]  
\centering
\aboverulesep = 0pt
\belowrulesep = 0pt
\begin{tabular}{c|ccc|cc|ccc}\toprule
S. No.  & $c$& $h_1$ &  $h_2$  & $b_1$& $b_2$&  $c$&$h_1$ &$h_2$  \\  
\midrule
  1.& $\frac{48}{7}$ & $\frac17$ & $\frac57$    & $b_1 \geq 0$& $b_2=1+3b_1$&  $\frac{384}{7}$&$\frac{22}{7}$&$\frac{19}{7}$\\
\rowcolor{mColor1}2.& $\frac{64}{7}$&$\frac27$&$\frac67$& $b_1 =1+3x$& $b_2=3+4x$ ($x\in \BZ_{\geq0}$)&  $\frac{400}{7}$&$\frac{23}{7}$&$\frac{20}{7}$\\

\bottomrule

\end{tabular}
\caption{$(3,9)$ admissible solutions of type $V_0$ using $(3,3)$ solutions.}
\label{tabV_33}
\end{table}

\section{Discussion of results}

\subsection{GHM duality for VVMFs}

Let $\mathbb{X}$ be a known RCFT with $n$ characters and $\widetilde{\mathbb{X}}$ be its GHM dual\cite{Gaberdiel:2016zke}. One then has
\begin{equation}
\widetilde{\mathbb{X}}^T\cdot M \cdot \mathbb{X} = J(\tau) + b\ ,
\end{equation}
where $b$ is a constant and $M$ is the diagonal matrix of multiplicities, $M=\text{Diag}(m_0=1,m_1,\ldots,m_{n-1}$. Let $\Xi$ (resp. $\widetilde{\Xi}$) denote the matrix of the basis obtained using the methods of Bantay and Gannon starting from $\mathbb{X}$ (resp. $\widetilde{\mathbb{X}}$). Experimentally using the examples of the $B_{r,1}$ RCFTs and their GHM duals we observe that
\begin{equation}
\widetilde{\Xi}(\tau)^T\cdot M \cdot \Xi(\tau) = J(\tau)\ M + B\ ,
\end{equation}
where $B$ is a constant matrix. This formula determines the $\mathcal{Y}$-matrix for $\widetilde{\mathbb{X}}$  given the $\mathcal{Y}$-matrix for $\mathbb{X}$.

\subsection{RCFTs for $(3,6\ell)$ admissible solutions}

We have found a large number of admissible solutions using the quasi-characters given by the methods of Bantay and Gannon. While, our focus was largely on $(3,6)$ theories, we obtained larger values of Wronskian index as well. Two questions arise from our work.
\begin{enumerate}
\item Which among these admissible solutions correspond to RCFTs?
\item Do all $(3,6)$ admissible solutions arise as the  admissible solutions that we obtain starting from $(3,0)$ theories? In other words, are there any $(3,6)$ admissible solutions that do not arise from $(3,0)$ theories.  The analogous statement for the two-character case has been proven\cite{Das:2023qns}. 
\end{enumerate}
We answer the first question in the sequel.

\subsubsection*{A simple class of $(3,6)$ and $(3,12)$ RCFTs}

Let $\mathbb{X}$ denote the VVMF associated with a $(3,0)$ RCFT. Then, $\mathbb{X}^{[1]}:=J^{1/3}\,\mathbb{X}$ is the VVMF associated with the $(3,6)$ obtained by tensoring the RCFT with the single character $E_{8,1}$ RCFT. Similarly, $\mathbb{X}^{[2]}:=J^{2/3}\mathbb{X}$ is the VVMF associated with the $(3,12)$ obtained by tensoring the RCFT with the single character $E_{8,1}^{\otimes2}$ RCFT or the $D_{16,1}$ single character RCFT. These RCFT's share the same $S$-matrix as the original RCFT but the $T$-matrix is multiplied by $\omega=\exp(- 2\pi i/3)$ for the RCFT with character $\mathbb{X}^{[1]}$ and $\omega^2$ for the RCFT with character $\mathbb{X}^{[2]}$.

For the $B_{r,1}$ theories, starting from a $B_{r+8}$ theory, we obtain the $E_{8,1}\otimes B_{r,1}$ theory as one of the admissible theories of type $U_{1,2}$ when $b_2=2^r$ (see table). 

Starting from the GHM dual of the $B_{r,1}$ theory, $\widetilde{B_{r,1}}$, with $c=\frac{47}2 -r$ for $r\in [0,8]$,  we obtain the $E_{8,1}\otimes \widetilde{B_{8-r,1}}$ theory as one of the admissible theories of type $U_{1,2}$.  For $r\in[0,23]$, we also recover the the $(3,0)$ $B_{r,1}$ theory as the type $U_{1,1}$ admissible solution of the $\widetilde{B_{23-r,1}}$ RCFT. The two RCFTs are connected by a set of admissible $(3,6)$ solutions.

Similarly, for $r\in [1,15]$ and $r\neq 0\mod 4$, we obtain the $(3,0)$ $D_{r,1}$ RCFT as the end-point of a series of $(3,6)$ admissible solutions of type $U_{1,1}$ starting from the $\widetilde{D_{24-r,1}}$ RCFT. We also obtain the $(3,6)$ RCFT $E_{8,1}\otimes D_r$ as a type $U_{1,2}$ admissible solution starting from the $D_{r+8,1}$ RCFT. A similar statement holds for the GHM dual of the $D_r$ RCFTs.

Characters $\mathbb{X}^{[2]}$ arise as $(3,12)$ RCFTs of type $U_2$. For instance,  when $\mathbb{X}$ is the character associated with $B_{16,1}$, $\mathbb{X}^{[2]}$ is obtained as a type $U_2$ admissible solution of the $\widetilde{B_{7,1}}$ RCFT when $b_1=b_2=1$.

\subsubsection*{Three character RCFT's with 3 or 4 primaries}

When is an admissible character associated with an RCFT? The admissible characters that we have constructed all come with known S and T matrices. These are the modular data. Further, the Verlinde coefficients are the same as the original RCFT, they are guaranteed to be positive. The additional data is in the form the fusion matrices $F$ and braiding matrices $B$\cite{Moore:1988qv}.   

In the examples that we have considered, the $B_r$ theories correspond to RCFT's with 3 primaries. Further, the $D_r$ theories ($r\neq 0\mod 4$) correspond to RCFT's with 4 primaries. Rayhaun has classified all RCFT's with 3 or 4 primaries and central charge $c<24$. He obtains all of them as admissible solutions of type $U$\cite[see Appendices E3,E4]{Rayhaun:2023pgc}. This is consistent with our results.

\subsubsection*{$(3,6)$ RCFTs from generalized cosets}

Let $(c,h_1,h_2,\ldots,h_{n-1})$ be the data of a known RCFT with $n$ characters and  Wronskian index $\ell$. Let 
$(\tilde{c},\tilde{h}_1,\tilde{h}_2,\ldots,\tilde{h}_{n-1})$ be the data of the generalized coset dual RCFT with Wronskian index $\tilde{\ell}$. Then, one has the relations
\begin{align*}
c+\tilde{c} & = N \ ,\\
h_i +\tilde{h}_i &= m_i \quad \text{with } m_i\geq 	2\ ,\\
\ell + \tilde{\ell} & = n^2 + (2N-1)n -6 (m_1+\ldots m_{n-1}) \ ,
\end{align*}
where $N=0\mod8$ is the central charge of a single-character RCFT. We will be interested in the coset duals of $(3,0)$ RCFTs which have Wronskian index $\tilde{\ell}=6$. Following Mukhi, Poddar, Singh, we look for cases where $m_1=m_2=2$. We then obtain $(3,6)$ theories when $N=32$, i.e., these theories are generalized cosets of the single character CFTs with $c=32$.

In unpublished work\cite{DGS}, Das et al. constructed $(3,6)$ RCFT's as a generalized coset of the  32 dimensional complete Kervaire lattices by one of the $(3,0)$ theories appearing from affine Kac-Moody Lie algebras. These RCFTs appear in our constructions as type $W$ and type $U$ admissible characters for particular values of the parameters. Tables \ref{RCFTW} and \ref{RCFTU} provide the relevant details of how these RCFTs are obtained in our construction of $(3,6)$ admissible solutions. 
\begin{table}[h]
\begin{center}
\begin{tabular}{c|c |c|l }
$(3,0)$ CFT &  No.CFTs  & $(3,6)$ CFT &\hspace*{0.5in} $b$ values that give an RCFT\\ \hline
\multirow{3}{*}{$A_{1,1}^2\sim D_{2,1}$ $c=2$} &   \multirow{3}{*}{47} &\multirow{3}{*}{$W_0(D_{6,1})$}&$90,106,122,138^3,154^2,170^2,186^4,202,$ \\
&&&$218^4,234,250^3,266^3,282^2,314^4,330,$\\
&&&$346^2,362,378^2,442,474^2,506,510,570$\\ \hline
$A_{3,1}\sim D_{3,1}$ $c=3$ & 15 & \multirow{2}{*}{$W_0(D_{5,1})$} &$129,161,201,225^2,273,317,$\\ &&& $321,337,353^2,377,497,521$\\ \hline
$A_{2,1}^2$ &9 & $W_0(A_{2,1}^2,c=4)$& $112,166,184,220,316,328^2,352,448$\\ \hline
$A_{4,1}$ & 4 & $W_0(A_{4,1},c=4)$&$168,288,368,552$\\ \hline
$D_{5,1}$ &6 & $W_0(D_{3,1})$ & $195,291,371,419,451,531$\\ \hline
$D_{7,1}$ & 3 & $W_0(D_{1,1})$ & $181,277,469$\\ \hline
\end{tabular}
\end{center}
\caption{$(3,6)$ RCFTs that are obtained from type $W_0$ admissible characters of $(3,0)$ theories. A superscript indicates the number of distinct RCFTs for that value of $b$.}\label{RCFTW}
\end{table}

\begin{table}[h]
\begin{center}
\begin{tabular}{c|c |c|l }
$(3,0)$ CFT & No. CFTs  & $(3,6)$ CFT & $b_i$ values that give an RCFT\\ \hline
$D_{9,1}$ &4 & $U_{1,2}(\widetilde{D_{1,1}})$ & $b_2=64,72,100,120$\\ \hline
$D_{10,1}$ &11 & $U_{1,2}(\widetilde{D_{2,1}})$ & $b_2=8,12,16,20^2,24,32^2,40,56^2$\\ \hline
$D_{14,1}$ & 2 & $U_{1,2}(\widetilde{D_{6,1}})$ & $b_2=1^2$ \\ \hline
$E_{6,1}^2$ & 3 & $U_{1,2}(A_{2,1}^2,c=20)$ & $b_2=9,15,27$ \\ \hline
$E_{7,1}^2$ & 6 & $U_{1,1}(\widetilde{D_{6,1}})$ &$b_1=4,6,8,12,20,36$ \\ \hline 
\end{tabular}
\end{center}
\caption{$(3,6)$ RCFTs that are obtained from type $U_{1,i}$ admissible characters of $(3,0)$ theories. A superscript indicates the number of distinct RCFTs for that value of $b$.}\label{RCFTU}
\end{table}

\section{Concluding Remarks}

This paper largely focused on constructing quasi-characters associated with $(3,0)$ and $(3,3)$ RCFTs. The focus was to construct $(3,6)$ and $(3,9)$ admissible solutions using these quasi-characters. A nice geometric feature emerges -- all admissible characters arise as integral points that lie on the boundary and interior of a polytope. This observation is not restricted to the three-characters examples considered in this paper but is valid for higher numbers of characters as well.

A slight generalization of our construction is to permit rational points in the polytope. The rational number corresponds to a denominator that is fixed to a particular value. This will require rescaling of the non-identity characters to regain non-negative integrality of the $q$-series. This will also modify the $S$-matrix. We will illustrate this with a particular example. Consider the $(3, 0)$ RCFT with $c = 20,\ (h_1, h_2) = (\frac{4}{3}, \frac{5}{3} ),\ (m_1,m_2)=(4,4)$. In this example, consider the admissible characters of type $U_2$. 
\begin{equation}
    U_2=\begin{pmatrix}
   \frac{1}{q^{5/6}} +4 \left(27 b_1+3 b_2+20\right) q^{1/6}+\left(9936 b_1+132 b_2+46790\right) q^{7/6}+\cdots \\
\frac{b_1}{q^{1/2}}+3 \left(52 b_1-6 b_2+405\right) q^{\frac{1}{2}}+18 \left(419 b_1-6 b_2+11340\right) q^{3/2}+\cdots \\
\frac{b_2}{q^{1/6}}+\left(-1458 b_1+16 b_2+8748\right) q^{5/6}+\left(-40824 b_1+98 b_2+776385\right) q^{11/6}+\cdots 
\end{pmatrix}
\end{equation}
Focusing on the identity character, we see that choosing $108b_1\in \mathbb{Z}$ and $12b_2\in \mathbb{Z}$ does not change the integrality of the identity character to \hl{100} orders that we checked. However,  if the degeneracies associated with row $2$ and row $3$ are scaled by $n_1$ and $n_2$ to regain integrality, then the multiplicities will be rescaled to $\frac{m_1}{n_1^2}$ and $\frac{m_2}{n_2^2}$, so that the torus partition function continues to be modular invariant.  Thus, we can choose $2b_1\in \mathbb{Z}$ and $2b_2\in\mathbb{Z}$. Interestingly, it would change the $U_{1,2}$ upper bound to $b_{2}\leq \frac{135}{2}$. Consider  $U_{1,2}$ at $b_{2}= \frac{135}{2}$. We then multiply row $2$ and row $3$ by $2$ to regain integrality. This gives rise to admissible solutions but the S-matrix is modified with $(m_1,m_2)=(1,1)$. This $S$-matrix does not lead to good fusion rules and thus, we do not obtain sensible solution for fractional values of $b_i$.
For $b_1=0$ and $b_2=\frac{135}{2}$ we get a $(3, 0)$ admissible solution with $c = 20,\ (h_1, h_2) = (\frac{7}{3}, \frac{2}{3} )$. Consistent with our observation, this was already shown to not be an RCFT\cite{Das:2022uoe}. So it appears that fractional values are not permitted in the examples that we considered.

Using a duality due to Bantay and Gannon, we were able to obtain the S-matrices and the fusion matrices for the 15 known  $(3,3)$ admissible characters. This enables us to rule out 8 of the 15 arising from the MLDE method. The remaining 7 all have the same $S$-matrix and have three primaries. This rank-three $S$-matrix appears in the classification of Modular Tensor Categories by Rowell et al.\cite[see Theorem 3.2]{Rowell:2009}. 

A natural extension of our work is carry out similar constructions for theories with four and five characters. For four character theories, the Wronskian index is even. The Bantay-Gannon dual of $(4,0)$ theories have Wronskian index $4$ and $(4,2)$ theories get mapped to $(4,2)$ theories. It appears that it might suffice to obtain all $(4,0)$ and $(4,2)$ theories and use our method of quasi-characters and Bantay-Gannon duality to obtain theories with higher values of Wronskian index. This is being currently studied\cite{SG-AS}.\\

\noindent \textbf{Acknowledgments:} JS would like to thank Chethan N. Gowdigere, Sachin Kala, Arpan Bhattacharyya, Saptaswa Ghosh, and Sounak Pal for useful discussions. JS is supported by the project SPON/SERB/63021. AS is supported by a CSIR Senior Research Fellowship.


\appendix

\section{Modular background}

In this appendix we define some modular forms of $PSL(2,\mathbb{Z})$ that are used in this paper. This is to provide a quick description to set up notations. For details, refer to the introduction to the topic by Zagier\cite{Zagier:1992,Zagier:2008}. 

A modular form of weight $w$  is a holomorphic function on the Poincar\'e upper half plane $\mathbb{H}$ such that
\begin{equation}
\phi\left(\tfrac{a\tau+b}{c\tau+d}\right) = (c\tau+d)^w\  
\phi(\tau)
\end{equation}
for $\left(\begin{smallmatrix} a & b \\  c & d \end{smallmatrix}\right)\in PSL(2,\BZ)$. The invariance under $\tau\rightarrow \tau+1$ implies that one has the expansion
\[
\phi(\tau) = \sum_{n=-\infty}^{\infty} a_n q^n\ .
\]
If the modular form is bounded as $\tau \rightarrow i\infty$ (or $q\rightarrow0$), one has $a_n=0$ for $n<0$. Such modular forms are called \textit{holomorphic} modular forms. A modular form is called weakly holomorphic if $a_n=0$ for some $n<-N$ where $N$ is a positive integer. 

 We  provide explicit formulae for the Eisenstein series $E_2, E_4,$ and $E_6$, the cusp form of weight  $12$, and Klein-$J$ invariant. We define $q=\exp(2\pi i \tau)$. 
\begin{align*}
E_2(\tau) &= 1 - 24\sum\limits_{n=1}^{\infty} \frac{n\, q^n}{(1-q^n)} = 1-24q-72q^2-96q^3  +\ldots \\
E_4(\tau) &= 1 + 240\sum\limits_{n=1}^{\infty} \frac{n^3\, q^n}{(1-q^n)} = 1 + 240 q + 2160 q^2 + 6720 q^3  + \ldots \\
E_6(\tau) &= 1 - 504\sum\limits_{n=1}^{\infty} \frac{n^5\, q^n}{(1-q^n)} = 1 - 504 q - 16632 q^2 - 122976 q^3 +\ldots \\
\Delta(\tau) &=   \frac{E_4^3 - E_6^2}{1728} = q - 24 q^2 + 252 q^3 - 1472 q^4 + 4830 q^5 + \ldots \\
J(\tau) &= \frac{E_4^3}{\Delta} = \frac{1}{q} + 744 + 196884 q + 21493760 q^2 +\ldots
\end{align*}
Further, only the $J$-invariant is a weakly holomorphic modular function while the others are holomorphic modular forms. 

\noindent The Serre-Ramanujan covariant derivative, acting on a modular form of weight $w$,  is defined by
\bea
\mathcal{D}_w := q\frac{d}{dq} - \frac{w}{12} E_2.
\eea
It maps a modular form of weight $w$ to one of weight $(w+2)$. The operators $\nabla_i$ defined in the main text do not change the weight.

\section{Admissible $(3,6)$ Admissible Solutions obtained from $(3,0)$ theories}\label{36Admissible}

This appendix provides a complete listing of $(3,6)$ admissible solutions of type $U$, $W$ and $V$ that we obtain starting from all $(3,0)$ examples that we have considered in this paper.
\clearpage

\subsection{$(3,6)$ admissible solutions of type $U$}

\begin{longtable}{c|ccc|c|cc}\toprule

 & \multicolumn{3}{c|}{(3,0)}& &\multicolumn{2}{c}{(3,6)}\\
 \multirow{-2}{5pt}{S. No.}& $c$ &  $h_1$ & $h_2$ & $b_i$&$h_1$& $h_2$\\
  \midrule
 1.& $\frac{17}{2}$&  $\frac{1}{2}$ & $\frac{17}{16}$& $b_2\leq1$&$\frac{1}{2}$& $\frac{1}{16}$\\
 \rowcolor{mColor2} 2.& $\frac{19}{2}$&  $\frac{1}{2}$ & $\frac{19}{16}$& $b_2\leq2$&$\frac{1}{2}$& $\frac{3}{16}$\\
 3.& $\frac{21}{2}$&  $\frac{1}{2}$ & $\frac{21}{16}$& $b_2\leq5$&$\frac{1}{2}$& $\frac{5}{16}$\\
\rowcolor{mColor2} 4.& $\frac{23}{2}$&  $\frac{1}{2}$ & $\frac{23}{16}$& $b_2\leq11$&$\frac{1}{2}$& $\frac{7}{16}$\\
 5.& $\frac{25}{2}$&  $\frac{3}{2}$& $\frac{9}{16}$& $b_1<25^*$&$\frac{1}{2}$& $\frac{9}{16}$\\
 \rowcolor{mColor2} 6.& $\frac{27}{2}$&  $\frac{3}{2}$& $\frac{11}{16}$& $b_1<27^*$&$\frac{1}{2}$& $\frac{11}{16}$\\
 7.& $\frac{29}{2}$&  $\frac{3}{2}$& $\frac{13}{16}$& $b_1<29^*$&$\frac{1}{2}$& $\frac{13}{16}$\\
 \rowcolor{mColor2} 8.& $\frac{31}{2}$& $\frac{3}{2}$& $\frac{15}{16}$& $b_1<31^*$& $\frac{1}{2}$&$\frac{15}{16}$\\
 9.& $\frac{33}{2}$& $\frac{3}{2}$& $\frac{17}{16}$& $b_1<33^*$& $\frac{1}{2}$&$\frac{17}{16}$\\
 & & & & $b_2\leq1$& $\frac{3}{2}$&$\frac{1}{16}$\\
 \rowcolor{mColor2} 10.& $\frac{35}{2}$& $\frac{3}{2}$& $\frac{19}{16}$& $b_1<35^*$& $\frac{1}{2}$&$\frac{19}{16}$\\
 \rowcolor{mColor2}& & & & $b_2\leq2$& $\frac{3}{2}$&$\frac{3}{16}$\\
 11.& $\frac{37}{2}$& $\frac{3}{2}$& $\frac{21}{16}$& $b_1<37^*$& $\frac{1}{2}$&$\frac{21}{16}$\\
 & & & & $b_2\leq6$& $\frac{3}{2}$&$\frac{5}{16}$\\
 \rowcolor{mColor2}12.& $\frac{39}{2}$& $\frac{3}{2}$& $\frac{23}{16}$& $b_1<39^*$& $\frac{1}{2}$&$\frac{23}{16}$\\
 \rowcolor{mColor2}& & & & $b_2\leq17$& $\frac{3}{2}$&$\frac{7}{16}$\\
 13.& $\frac{41}{2}$& $\frac{3}{2}$& $\frac{25}{16}$& $b_1<41^*$& $\frac{1}{2}$&$\frac{25}{16}$\\
 & & & & $b_2\leq45$& $\frac{3}{2}$&$\frac{9}{16}$\\
 \rowcolor{mColor2} 14.& $\frac{43}{2}$& $\frac{3}{2}$& $\frac{27}{16}$& $b_1<43^*$& $\frac{1}{2}$&$\frac{27}{16}$\\
 \rowcolor{mColor2}& & & & $b_2\leq125$& $\frac{3}{2}$&$\frac{11}{16}$\\
 15.& $\frac{45}{2}$& $\frac{3}{2}$& $\frac{29}{16}$& $b_1<45^*$& $\frac{1}{2}$&$\frac{29}{16}$\\
 & & & & $b_2\leq398$& $\frac{3}{2}$&$\frac{13}{16}$\\
 \rowcolor{mColor2}16.& $\frac{47}{2}$& $\frac{3}{2}$& $\frac{31}{16}$& $b_1<47^*$& $\frac{1}{2}$&$\frac{31}{16}$\\
 \rowcolor{mColor2}& & & & $b_2\leq2185$& $\frac{3}{2}$&$\frac{15}{16}$\\
  \bottomrule
\caption{$(3,6)$ admissible solutions of type $U_{1,i}$ using $(3,0)$ $B_{r,1}$ theories and their GHM duals. An asterisk indicates that we obtain another $(3,0)$ admissible solution on equality.}
\label{tabU_Ising}
\end{longtable}

\newpage

\begin{longtable}{c|ccc|c|cc}\toprule

 & \multicolumn{3}{c|}{(3,0)}& &\multicolumn{2}{c}{(3,6)}\\
 \multirow{-2}{5pt}{S. No.}& $c$ &  $h_1$ & $h_2$ & $b_i$&$h_1$& $h_2$\\
  \midrule
1.& $9$&  $\frac{1}{2}$ & $\frac{9}{8}$& $b_2\leq1$&$\frac{1}{2}$& $\frac{1}{8}$\\
\rowcolor{mColor2} 2.& $11$&  $\frac{1}{2}$ & $\frac{11}{8}$& $b_2\leq5$&$\frac{1}{2}$& $\frac{3}{8}$\\3.& $13$&  $\frac{3}{2}$& $\frac{5}{8}$& $b_1<26^*$&$\frac{1}{2}$& $\frac{5}{8}$\\
\rowcolor{mColor2} 4.& $15$&  $\frac{3}{2}$& $\frac{7}{8}$& $b_1<30^*$&$\frac{1}{2}$& $\frac{7}{8}$\\
 5.& $17$& $\frac{3}{2}$& $\frac{9}{8}$& $b_1<34^*$& $\frac{1}{2}$&$\frac{9}{8}$\\
 & & & & $b_2\leq1$& $\frac{3}{2}$&$\frac{1}{8}$\\
 \rowcolor{mColor2} 6.& $19$& $\frac{3}{2}$& $\frac{11}{8}$& $b_1<38^*$& $\frac{1}{2}$&$\frac{11}{8}$\\
 \rowcolor{mColor2}& & & & $b_2\leq7$& $\frac{3}{2}$&$\frac{3}{8}$\\
 7.& $21$& $\frac{3}{2}$& $\frac{13}{8}$& $b_1<42^*$& $\frac{1}{2}$&$\frac{13}{8}$\\
 & & & & $b_2\leq53$& $\frac{3}{2}$&$\frac{5}{8}$\\
 \rowcolor{mColor2} 8.& $23$& $\frac{3}{2}$& $\frac{15}{8}$& $b_1<46^*$& $\frac{1}{2}$&$\frac{15}{8}$\\
 \rowcolor{mColor2}& & & & $b_2<575^*$& $\frac{3}{2}$&$\frac{7}{8}$\\
   \midrule
 9.& $10$& $\frac{1}{2}$& $\frac{5}{4}$& $b_2\leq2$& $\frac{1}{2}$&$\frac{1}{4}$\\
 \rowcolor{mColor2} 10.& $14$& $\frac{3}{2}$& $\frac{3}{4}$& $b_1<28^*$& $\frac{1}{2}$&$\frac{3}{4}$\\
 11.& $18$& $\frac{3}{2}$& $\frac{5}{4}$& $b_1<36^*$& $\frac{1}{2}$&$\frac{5}{4}$\\
 & & & & $b_2\leq3$& $\frac{3}{2}$&$\frac{1}{4}$\\
 \rowcolor{mColor2}12.& $22$& $\frac{3}{2}$& $\frac{7}{4}$ &  $b_1<44^*$& $\frac{1}{2}$&$\frac{7}{4}$\\
 \rowcolor{mColor2}& & & & $b_2<154^*$& $\frac{3}{2}$&$\frac{3}{4}$\\
 \bottomrule
\caption{$(3,6)$ admissible solutions of type $U_{1,i}$ using $(3,0)$ $D_{r,1}$ theories  and their GHM duals. An asterisk indicates that we obtain another $(3,0)$ admissible solution on equality.}
\label{tabU_U12}
\end{longtable}

\begin{longtable}{c|ccc|c|cc}\toprule

 & \multicolumn{3}{c|}{(3,0)}& &\multicolumn{2}{c}{(3,6)}\\
 \multirow{-2}{5pt}{S. No.}& $c$ &  $h_1$ & $h_2$ & $b_i$&$h_1$& $h_2$\\
  \midrule
 1.& $\frac{60}{7}$& $\frac{3}{7}$& $\frac{8}{7}$& $b_2\leq1$& $\frac{3}{7}$&$\frac{1}{7}$\\
 \rowcolor{mColor2}2.& $\frac{68}{7}$& $\frac{3}{7}$& $\frac{9}{7}$& $b_2\leq3$& $\frac{3}{7}$&$\frac{2}{7}$\\
 3.& $\frac{100}{7}$& $\frac{5}{7}$& $\frac{11}{7}$& $b_2\leq55$& $\frac{5}{7}$&$\frac{4}{7}$\\
 \rowcolor{mColor2}4.& $\frac{108}{7}$& $\frac{6}{7}$& $\frac{11}{7}$& $b_2<39^*$& $\frac{6}{7}$&$\frac{4}{7}$\\
 5.& $\frac{116}{7}$& $\frac{8}{7}$& $\frac{10}{7}$& $b_1\leq1$& $\frac{1}{7}$&$\frac{10}{7}$\\
 & & & & $b_2\leq16$& $\frac{8}{7}$&$\frac{3}{7}$\\
 \rowcolor{mColor2}6.& $\frac{124}{7}$& $\frac{9}{7}$& $\frac{10}{7}$& $b_1\leq4$& $\frac{2}{7}$&$\frac{10}{7}$\\
 \rowcolor{mColor2}& & & & $b_2\leq27$& $\frac{9}{7}$&$\frac{3}{7}$\\
 5.& $\frac{156}{7}$& $\frac{11}{7}$& $\frac{12}{7}$& $b_1\leq95$& $\frac{4}{7}$&$\frac{12}{7}$\\
 & & & & $b_2<130^*$& $\frac{11}{7}$&$\frac{5}{7}$\\
 \rowcolor{mColor2}6.& $\frac{164}{7}$& $\frac{11}{7}$& $\frac{13}{7}$& $b_1\leq65$& $\frac{4}{7}$&$\frac{13}{7}$\\
 \rowcolor{mColor2}& & & & $b_2\leq436$& $\frac{11}{7}$&$\frac{6}{7}$\\
  7.& $\frac{236}{7}$& $\frac{16}{7}$& $\frac{17}{7}$& $b_1=17x$& $\frac{9}{7}$&$\frac{17}{7}$\\
 & \multicolumn{3}{c|}{$x\in\mathbb{Z}_{>0}$}& $b_2=17x$& $\frac{16}{7}$&$\frac{10}{7}$\\
 \bottomrule
\caption{$(3,6)$ admissible solutions using $(3,0)$ solutions of type $U_{1,i}$ from type  $LY_2$ theories.}
\label{tabU_Ly2}
\end{longtable}

\begin{longtable}{c|ccc|c|cc}\toprule

 & \multicolumn{3}{c|}{(3,0)}& &\multicolumn{2}{c}{(3,6)}\\
 \multirow{-2}{5pt}{S. No.}& $c$ &  $h_1$ & $h_2$ & $b_i$&$h_1$& $h_2$\\
  \midrule
 1.& $\frac{44}{5}$& $\frac{2}{5}$& $\frac{6}{5}$& $b_2\leq1$& $\frac{2}{5}$&$\frac{1}{5}$\\
 \rowcolor{mColor2}2.& $\frac{52}{5}$& $\frac{3}{5}$& $\frac{6}{5}$& $b_2\leq3$& $\frac{3}{5}$&$\frac{1}{5}$\\
 3.& $\frac{68}{5}$& $\frac{4}{5}$& $\frac{7}{5}$& $b_2\leq11$& $\frac{4}{5}$&$\frac{2}{5}$\\
 \rowcolor{mColor2}4.& $\frac{76}{5}$& $\frac{4}{5}$& $\frac{8}{5}$& $b_2<57^*$& $\frac{4}{5}$&$\frac{3}{5}$\\
 5.& $\frac{84}{5}$& $\frac{6}{5}$& $\frac{7}{5}$& $b_1\leq1$& $\frac{1}{5}$&$\frac{7}{5}$\\
 & & & & $b_2\leq17$& $\frac{6}{5}$&$\frac{2}{5}$\\
 \rowcolor{mColor2}6.& $\frac{92}{5}$& $\frac{6}{5}$& $\frac{8}{5}$& $b_1\leq4$& $\frac{1}{5}$&$\frac{8}{5}$\\
 \rowcolor{mColor2}& & & & $b_2\leq59$& $\frac{6}{5}$&$\frac{3}{5}$\\
 5.& $\frac{108}{5}$& $\frac{7}{5}$& $\frac{9}{5}$& $b_1\leq16$& $\frac{2}{5}$&$\frac{9}{5}$\\
 & & & & $b_2<459^*$& $\frac{7}{5}$&$\frac{4}{5}$\\
 \rowcolor{mColor2}6.& $\frac{116}{5}$& $\frac{8}{5}$& $\frac{9}{5}$& $b_1\leq100$& $\frac{3}{5}$&$\frac{9}{5}$\\
 \rowcolor{mColor2}& & & & $b_2\leq165$& $\frac{8}{5}$&$\frac{4}{5}$\\
  7.& $\frac{164}{5}$& $\frac{12}{5}$& $\frac{11}{5}$& $b_1=11x$& $\frac{7}{5}$&$\frac{11}{5}$\\
 & \multicolumn{3}{c|}{$x\in\mathbb{Z}_{>0}$}& $b_2=11x$& $\frac{12}{5}$&$\frac{6}{5}$\\
 \bottomrule
\caption{$(3,6)$ admissible solutions of type $U_{1,i}$ using $(3,0)$ theories of type $(LY_1)^{\otimes 2}$.}
\label{tabU_LY1s}
\end{longtable}
\clearpage

\begin{longtable}{c|ccc|c|cc}\toprule

 & \multicolumn{3}{c|}{(3,0)}& &\multicolumn{2}{c}{(3,6)}\\
 \multirow{-2}{5pt}{S. No.}& $c$ &  $h_1$ & $h_2$ & $b_i$&$h_1$& $h_2$\\
 1.& $12$& $\frac{2}{3}$& $\frac{4}{3}$& $b_2\leq4$& $\frac{2}{3}$&$\frac{1}{3}$\\ 
 \rowcolor{mColor2}2.& $20$& $\frac{4}{3}$& $\frac{5}{3}$& $b_1<6^*$& $\frac{1}{3}$&$\frac{5}{3}$\\
 \rowcolor{mColor2}& & & & $b_2<67$& $\frac{4}{3}$&$\frac{2}{3}$\\
 3.& $20$& $\frac{7}{5}$& $\frac{8}{5}$& $b_1\leq12$& $\frac{2}{5}$&$\frac{8}{5}$\\
 & & & & $b_2\leq51$& $\frac{7}{5}$&$\frac{3}{5}$\\ \bottomrule

\caption{$(3,6)$ admissible solutions of type $U_{1,i}$  using $(3,0)$ theories of type $A_{2,1}^{\otimes 2}$.}
\label{tabU}
\end{longtable}

\subsection{$(3,6)$ admissible solutions of type $W$}

\begin{longtable}{c|ccc||ll||ccc}\toprule

 & \multicolumn{3}{c||}{(3,0) Sol.}&~~~~~~$c_i$&~~~~~$b$ & \multicolumn{3}{c}{$(3, 6)$ Adm.}\\
\cmidrule(lr){2-4}  
\cmidrule(lr){7-9}          
\multirow{-2}{5pt}{S. No.}  & $c$ & $h_1$& $h_2$&  &&  $c$ &$h_1$&$h_2$\\\midrule
1.& $\frac{1}{2}$& $\frac12$& $\frac{1}{16}$&  $2\leq c_1<49^*$&$b=25c_1-49$&  $\frac{49}{2}$&$\frac12$& $\frac{33}{16}$\\
 & & & & $c_2=1$& $b=274$&  $\frac{49}{2}$&$\frac52$&$\frac{1}{16}$\\
\rowcolor{mColor2}2.& $\frac{3}{2}$& $\frac{1}{2}$& $\frac{3}{16}$&  $4\leq c_1<51^*$&$b=3 (9 c_1-34 )$&  $\frac{51}{2}$&$\frac12$& $\frac{35}{16}$\\
 3.& $\frac{5}{2}$& $\frac12$& $\frac{5}{16}$& $6\leq c_1<53^*$& $b=29c_1-159$& $\frac{53}{2}$& $\frac12$&$\frac{37}{16}$\\
 & & & & $c_2=4$& $b=268$& $\frac{53}{2}$& $\frac52$&$\frac{5}{16}$\\
 \rowcolor{mColor2}4.& $\frac{7}{2}$& $\frac{1}{2}$& $\frac{7}{16}$& $8\leq c_1<55^*$& $b=31c_1-220$& $\frac{55}{2}$& $\frac12$&$\frac{39}{16}$\\
 \rowcolor{mColor2}& \multicolumn{3}{c||}{$x\in[0,2]$}& $c_2=8+7x$& $b=263 + 512 x$& $\frac{55}{2}$& $\frac52$&$\frac{7}{16}$\\
 5.& $\frac{9}{2}$& $\frac{1}{2}$& $\frac{9}{16}$& $9\leq c_1<57^*$& $b=3 (11c_1-95)$& $\frac{57}{2}$& $\frac12$&$\frac{41}{16}$\\
 & \multicolumn{3}{c||}{$x\in[0,6]$}& $c_2=10+9x$& $b=86 + 256x$& $\frac{57}{2}$& $\frac52$&$\frac{9}{16}$\\
 \rowcolor{mColor2}6.& $\frac{11}{2}$& $\frac{1}{2}$& $\frac{11}{16}$& $11\leq c_1<59^*$& $b=35c_1-354$& $\frac{59}{2}$& $\frac12$&$\frac{43}{16}$\\
 \rowcolor{mColor2}& \multicolumn{3}{c||}{$x\in[0,16]$}& $c_2=21+11x$& $b=121 + 128x$& $\frac{59}{2}$& $\frac52$&$\frac{11}{16}$\\
 7.& $\frac{13}{2}$& $\frac{1}{2}$& $\frac{13}{16}$& $12\leq c_1<61^*$& $b=37c_1-427$& $\frac{61}{2}$& $\frac12$&$\frac{45}{16}$\\
 & \multicolumn{3}{c||}{$x\in[0,55]$}& $c_2=25+13x$& $b=48 + 64 x$& $\frac{61}{2}$& $\frac52$&$\frac{13}{16}$\\
 \rowcolor{mColor2}8.& $\frac{15}{2}$& $\frac{1}{2}$& $\frac{15}{16}$& $13\leq c_1<63^*$& $b=3(13c_1-168)$& $\frac{63}{2}$& $\frac12$&$\frac{47}{16}$\\
 \rowcolor{mColor2}& \multicolumn{3}{c||}{$x\in[0,316]$}& $c_2=33+15x$& $b=27 + 32x$& $\frac{63}{2}$& $\frac52$&$\frac{15}{16}$\\
 9.& $\frac{17}{2}$& $\frac{1}{2}$& $\frac{17}{16}$& $15\leq c_1<65^*$& $b=41c_1-585$& $\frac{65}{2}$& $\frac12$&$\frac{49}{16}$\\
 & \multicolumn{3}{c|}{$x\in\mathbb{Z}_{>0}$} &  $c_2=35+17x$& $b=10+16x$& $\frac{65}{2}$& $\frac52$&$\frac{17}{16}$\\
 \rowcolor{mColor2}10.& $\frac{19}{2}$& $\frac{1}{2}$& $\frac{19}{16}$& $16\leq c_1<67^*$& $b=43c_1-670$& $\frac{67}{2}$& $\frac12$&$\frac{51}{16}$\\
 \rowcolor{mColor2}& \multicolumn{3}{c|}{$x\in\mathbb{Z}_{>0}$} & $c_2=37+19x$& $b=5+8x$& $\frac{67}{2}$& $\frac52$&$\frac{19}{16}$\\
11.& $\frac{21}{2}$& $\frac{1}{2}$& $\frac{21}{16}$&  $17\leq c_1<69^*$&$b=3 (15c_1-253)$&  $\frac{69}{2}$&$\frac12$& $\frac{53}{16}$\\
& \multicolumn{3}{c|}{$x\in\mathbb{Z}_{>0}$} &  $c_2=23+21 x$&$b=4x$&  $\frac{69}{2}$&$\frac52$& $\frac{21}{16}$\\
\rowcolor{mColor2}12.& $\frac{23}{2}$& $\frac{1}{2}$& $\frac{23}{16}$&  $19\leq c_1<71^*$&$b=47c_1-852$&  $\frac{71}{2}$&$\frac12$& $\frac{55}{16}$\\\rowcolor{mColor2}& \multicolumn{3}{c|}{$x\in\mathbb{Z}_{>0}$} &  $c_2=47+23x$&$b=1+2x$&  $\frac{71}{2}$&$\frac52$& $\frac{23}{16}$\\
\midrule
%
 13.& $\frac{25}{2}$& $\frac32$& $\frac{9}{16}$& $c_1=146+25b$& $b\geq 0$& $\frac{73}{2}$& $\frac32$&$\frac{41}{16}$\\
 \rowcolor{mColor2}14.& $\frac{27}{2}$& $\frac32$& $\frac{11}{16}$& $c_1=325+27b$& $b\geq 0$& $\frac{75}{2}$& $\frac32$&$\frac{43}{16}$\\
 \rowcolor{mColor2} & \multicolumn{3}{c|}{$x\in\mathbb{Z}_{>0}$} & $c_2=87$& $b=736$& $\frac{77}{2}$& $\frac72$&$\frac{11}{16}$\\
 15.& $\frac{29}{2}$& $\frac32$& $\frac{13}{16}$& $c_1=616+29b$& $b\geq 0$& $\frac{77}{2}$& $\frac32$&$\frac{45}{16}$\\
 \rowcolor{mColor2}16.& $\frac{31}{2}$& $\frac32$& $\frac{15}{16}$& $c_1=1027+31b$& $b\geq 0$& $\frac{79}{2}$& $\frac32$&$\frac{47}{16}$\\
 \rowcolor{mColor2}& \multicolumn{3}{c||}{$x=0,1$}& $c_2=930+3875x$& $b=1754 + 8704x$& $\frac{79}{2}$& $\frac72$&$\frac{15}{16}$\\
 17.& $\frac{33}{2}$& $\frac32$& $\frac{17}{16}$& $c_1=1566+33b$& $b\geq 0$& $\frac{81}{2}$& $\frac32$&$\frac{49}{16}$\\
 & \multicolumn{3}{c|}{$x\in\mathbb{Z}_{>0}$} & $c_2=2024+4301x$& $b=1545 + 3840 x$& $\frac{81}{2}$& $\frac72$&$\frac{17}{16}$\\
 \rowcolor{mColor2}18.& $\frac{35}{2}$& $\frac32$& $\frac{19}{16}$& $c_1=2241+35b$& $b\geq 0$& $\frac{83}{2}$& $\frac32$&$\frac{51}{16}$\\
 \rowcolor{mColor2}& \multicolumn{3}{c|}{$x\in\mathbb{Z}_{>0}$} & $c_2=588+4655x$& $b=4 + 1664 x$& $\frac{83}{2}$& $\frac72$&$\frac{19}{16}$\\
 19.& $\frac{37}{2}$& $\frac32$& $\frac{21}{16}$& $c_1=3060+37b$& $b\geq 0$& $\frac{85}{2}$& $\frac32$&$\frac{53}{16}$\\
 & \multicolumn{3}{c|}{$x\in\mathbb{Z}_{>0}$} & $c_2=2109+4921x$& $b=139 + 704 x$& $\frac{85}{2}$& $\frac72$&$\frac{21}{16}$\\
 \rowcolor{mColor2}20.& $\frac{39}{2}$& $\frac32$& $\frac{23}{16}$& $c_1=4031+39b$& $b\geq 0$& $\frac{87}{2}$& $\frac32$&$\frac{55}{16}$\\
 \rowcolor{mColor2}& \multicolumn{3}{c|}{$x\in\mathbb{Z}_{>0}$} & $c_2=6188+1105x$& $b=222+288x$& $\frac{87}{2}$& $\frac72$&$\frac{23}{16}$\\
 21.& $\frac{41}{2}$& $\frac32$& $\frac{25}{16}$& $c_1=5162+41b$& $b\geq 0$& $\frac{89}{2}$& $\frac32$&$\frac{57}{16}$\\
 & \multicolumn{3}{c|}{$x\in\mathbb{Z}_{>0}$} & $c_2=5986 + 5125x$& $b=29+112x$& $\frac{89}{2}$& $\frac72$&$\frac{25}{16}$\\
 \rowcolor{mColor2}22.& $\frac{43}{2}$& $\frac32$& $\frac{27}{16}$& $c_1=6461+43b$& $b\geq 0$& $\frac{91}{2}$& $\frac32$&$\frac{59}{16}$\\
 \rowcolor{mColor2}& \multicolumn{3}{c|}{$x\in\mathbb{Z}_{>0}$} & $c_2=11180+5031x$& $b=8+40x$& $\frac{91}{2}$& $\frac72$&$\frac{27}{16}$\\
 23.& $\frac{45}{2}$& $\frac32$& $\frac{29}{16}$& $c_1=7936+45b$& $b\geq 0$& $\frac{93}{2}$& $\frac32$&$\frac{61}{16}$\\
 & \multicolumn{3}{c|}{$x\in\mathbb{Z}_{>0}$} & $c_2=27027+1595x$& $b=3+4x$& $\frac{93}{2}$& $\frac72$&$\frac{21}{16}$\\
 \rowcolor{mColor2}24.& $\frac{47}{2}$& $\frac32$& $\frac{31}{16}$& $c_1=9595+47b$& $b\geq 0$& $\frac{95}{2}$& $\frac32$&$\frac{63}{16}$\\
 \rowcolor{mColor2}& \multicolumn{3}{c|}{$x\in\mathbb{Z}_{>0}$} & $c_2=115197+4371x$& $b=2x$& $\frac{95}{2}$& $\frac72$&$\frac{23}{16}$\\
 \bottomrule


\caption{$(3,6)$ admissible solutions of type $W_{1,i}$ using $(3,0)$ solutions of type $B_{r,1}$ and their GHM duals}
\label{tabW_Ising-2}
\end{longtable}
\clearpage

\begin{longtable}{c|ccc||ll|ccc}\toprule

 & \multicolumn{3}{c||}{(3,0) Sol.}&   ~~~~~ $c_i$& ~~~~~$b$& \multicolumn{3}{c}{$(3, 6)$ Adm.}\\
\cmidrule(lr){2-4} 
\cmidrule(lr){7-9}          
\multirow{-2}{5pt}{S. No.}  & $c$ & $h_1$& $h_2$&  & &  $c$ &$h_1$&$h_2$\\\midrule
1.& $1$& $\frac12$& $\frac{1}{8}$&  $3\leq c_1<50^*$& $b=26c_1-75$&  $25$&$\frac12$& $\frac{17}{8}$\\
 & & & & $c_2=1$&  $b=273$&  $25$&$\frac52$&$\frac{1}{8}$\\
\rowcolor{mColor2}2.& $3$& $\frac12$& $\frac{3}{8}$&  $7\leq c_1<54^*$& $b=3(10c_1-63)$&  $27$&$\frac12$& $\frac{19}{8}$\\
 \rowcolor{mColor2}& \multicolumn{3}{c||}{$x\in[0,2]$}& $c_2=3+3x$& $b=95 + 512 x$& $27$& $\frac52$&$\frac{3}{8}$\\
 3.& $5$& $\frac12$& $\frac{5}{8}$& $10\leq c_1<58^*$&  $b=34c_1-319$& $29$& $\frac12$&$\frac{21}{8}$\\
 & \multicolumn{3}{c||}{$x\in[0,14]$}& $c_2=11+5x$&  $b=125 + 128x$& $29$& $\frac52$&$\frac{5}{8}$\\
 \rowcolor{mColor2}4.& $7$& $\frac12$& $\frac{7}{8}$& $13\leq c_1<62^*$&  $b=38c_1-465$& $31$& $\frac12$&$\frac{23}{8}$\\
 \rowcolor{mColor2}& \multicolumn{3}{c||}{$x\in[0,163]$}& $c_2=15+7x$&  $b=11+32x$& $31$& $\frac52$&$\frac{7}{8}$\\
 5.& $9$& $\frac12$& $\frac{9}{8}$& $15\leq c_1<66^*$&  $b=3(14c_1-209)$& $33$& $\frac12$&$\frac{25}{8}$\\
 & \multicolumn{3}{c|}{$x\in\mathbb{Z}_{>0}$} & $c_2=19+9x$&  $b=1+8x$& $33$& $\frac52$&$\frac{9}{8}$\\
 \rowcolor{mColor2}6.& $11$& $\frac12$& $\frac{11}{8}$& $18\leq c_1<70^*$&  $b=23(2c_1-35)$& $35$& $\frac12$&$\frac{27}{8}$\\
 \rowcolor{mColor2}& \multicolumn{3}{c|}{$x\in\mathbb{Z}_{>0}$} & $c_2=23+11x$&  $b=1+2x$& $35$& $\frac52$&$\frac{11}{8}$\\
 \midrule
 7.& $13$& $\frac32$& $\frac{5}{8}$& $c_1=222+26b$&  $b\geq0$& $37$& $\frac32$&$\frac{21}{8}$\\
 \rowcolor{mColor2}8.& $15$& $\frac32$& $\frac{7}{8}$& $c_1=806+30b$&  $b\geq0$& $39$& $\frac32$&$\frac{23}{8}$\\
 \rowcolor{mColor2}& \multicolumn{3}{c||}{$x\in[0,3]$}& $c_2=78+455x$&  $b=15+2304x$& $39$& $\frac72$&$\frac{7}{8}$\\
 9.& $17$& $\frac32$& $\frac{9}{8}$& $c_1=1886+34b$&  $b\geq0$& $41$& $\frac32$&$\frac{25}{8}$\\
 & \multicolumn{3}{c|}{$x\in\mathbb{Z}_{>0}$} & $c_2=748+561x$&  $b=365+448x$& $41$& $\frac72$&$\frac{9}{8}$\\
 \rowcolor{mColor2}10.& $19$& $\frac32$& $\frac{11}{8}$& $c_1=3526+38b$&  $b\geq0$& $43$& $\frac32$&$\frac{27}{8}$\\
 \rowcolor{mColor2}& \multicolumn{3}{c|}{$x\in\mathbb{Z}_{>0}$} & $c_2=1596+627x$&  $b=59+80x$& $43$& $\frac72$&$\frac{11}{8}$\\
11.& $21$& $\frac32$& $\frac{13}{8}$&  $c_1=5790+42b$& $b\geq0$&  $45$&$\frac32$& $\frac{29}{8}$\\
& \multicolumn{3}{c|}{$x\in\mathbb{Z}_{>0}$} &  $c_2=5292+637x$& $9+12x$&  $45$&$\frac72$& $\frac{13}{8}$\\
\rowcolor{mColor2}12.& $23$& $\frac32$& $\frac{15}{8}$&  $c_1=8742+46b$& $b\geq0$&  $47$&$\frac32$& $\frac{31}{8}$\\\rowcolor{mColor2}& & & &  $c_2=33511+575b$& $b\geq0$&  $47$&$\frac72$& $\frac{15}{8}$\\

\caption{$(3,6)$ admissible solutions of type $W_{1,i}$ using $(3,0)$ solutions of type $B_{r,1}$ (odd $r$) and their GHM duals}
\label{tabW_U12}
\end{longtable}

\clearpage
\begin{longtable}{c|ccc||ll||lcc}\toprule

 & \multicolumn{3}{c||}{(3,0) Sol.}& ~~~~~~   $c_i$&~~~~~$b$& \multicolumn{3}{c}{$(3, 6)$ Adm.}\\
\cmidrule(lr){2-4} 
\cmidrule(lr){7-9}          
\multirow{-2}{5pt}{S. No.}  & $c$ & $h_1$& $h_2$&  &&  $c$ &$h_1$&$h_2$\\\midrule
1.& $2$& $\frac12$& $\frac{1}{4}$&  $5\leq c_1<52^*$&$b=2(14c_1-65)$&  $26$&$\frac12$& $\frac{9}{4}$\\
 & & & & $c_2=2,3$& $b=2(256c_2-377)$&  $26$&$\frac52$&$\frac{1}{4}$\\
\rowcolor{mColor2}2.& $6$& $\frac12$& $\frac{3}{4}$&  $11\leq c_1<60^*$&$b=6(6c_1-65)$&  $30$
&$\frac12$& $\frac{11}{4}$\\
 \rowcolor{mColor2}& \multicolumn{3}{c||}{$x \in [0,86]$}& $c_2=10+3x$& $b=15 + 48 x$& $30$& $\frac52$&$\frac{3}{4}$\\
 3.& $10$& $\frac12$& $\frac{5}{4}$& $17\leq c_1<68^*$& $b=2(22c_1-357)$& $34$& $\frac12$&$\frac{13}{4}$\\
 & \multicolumn{3}{c||}{$x\in[0,\infty)$} & $c_2=17+5x$& $b=2x$& $34$& $\frac52$&$\frac{5}{4}$\\
 \midrule
 \rowcolor{mColor2}4.& $14$& $\frac32$& $\frac{3}{4}$& $c_1=456+28b$& $b\geq 0$& $38$& $\frac32$&$\frac{11}{4}$\\
 \rowcolor{mColor2}& \multicolumn{3}{c||}{$x\in[0,7]$}& $c_2=49+49x$& $b=146 + 640 x$& $38$& $\frac72$&$\frac{3}{4}$\\
 5.& $18$& $\frac32$& $\frac{5}{4}$& $c_1=2632+36b$& $b\geq 0$& $42$& $\frac32$&$\frac{13}{4}$\\
 & \multicolumn{3}{c||}{$x\in[0,\infty)$}& $c_2=591+25x$& $b=6+8x$& $42$& $\frac72$&$\frac{5}{4}$\\
 \rowcolor{mColor2}6.& $22$& $\frac32$& $\frac{7}{4}$& $c_1=7176+44b$& $b\geq 0$& $46$& $\frac32$&$\frac{15}{4}$\\
 \rowcolor{mColor2}& & & & $c_2=11132+154b$& $b\geq 0$& $46$& $\frac72$&$\frac{7}{4}$\\
\bottomrule

\caption{$(3,6)$ admissible solutions of type $W_{1,i}$ using $(3,0)$ solutions of type $B_{r,1}$ (even $r$) and their GHM duals}
\label{tabW_A11s}
\end{longtable}
\begin{longtable}{c|ccc||l||ccc}\toprule

 & \multicolumn{3}{c|}{(3,0) Sol.}&   ~~~~~~~~~ $c_i$, $b$& \multicolumn{3}{c}{$(3, 6)$ Adm.}\\
\cmidrule(lr){2-4} 
 
\cmidrule(lr){6-8}          
\multirow{-2}{5pt}{S. No.}  & $c$ & $h_1$& $h_2$&    &  $c$ &$h_1$&$h_2$\\\midrule
1.& $\frac{4}{7}$& $\frac{1}{7}$& $\frac{3}{7}$&  $c_1=1$, $b=326$&  $\frac{172}{7}$&$\frac17$& $\frac{17}{7}$\\
 & & & & $3\leq c_2\leq22$,  $b=55c_2-129$&  $\frac{172}{7}$&$\frac{15}7$&$\frac{3}{7}$\\
\rowcolor{mColor2}2.& $\frac{12}{7}$& $\frac{2}{7}$& $\frac{3}{7}$&  $c_1=2,4$, $b=69,696$&  $\frac{180}{7}$&$\frac27$& $\frac{17}{7}$\\
 3.& $\frac{44}{7}$& $\frac{4}{7}$& $\frac{5}{7}$& $c_1=22+44x$,  $b=71+725x$,  $x\in[0,2]$& $\frac{212}{7}$& $\frac47$&$\frac{19}{7}$\\
 & & & & $c_2=20+11x$  $b=116+138x$, $x\in [0,17]$& $\frac{212}{7}$& $\frac{18}7$&$\frac{5}{7}$\\
 \rowcolor{mColor2}4.& $\frac{52}{7}$& $\frac{4}{7}$& $\frac{6}{7}$& $c_1=33,72$, $b=528,1582$& $\frac{220}{7}$& $\frac47$&$\frac{20}{7}$\\
 \rowcolor{mColor2}& & & & $c_2=26+13x$, $b=72+76x$, $x\in [0,62]$& $\frac{220}{7}$& $\frac{18}7$&$\frac{6}{7}$\\
 5.& $\frac{60}{7}$& $\frac{3}{7}$& $\frac{8}{7}$& $c_2=13+10x$,  $b=6+9x$& $\frac{228}{7}$& $\frac{17}7$&$\frac87$\\
 \rowcolor{mColor2}6.& $\frac{68}{7}$& $\frac{3}{7}$& $\frac{9}{7}$& $c_2=17x^*$,  $b=5x$& $\frac{236}{7}$& $\frac{17}7$&$\frac97$\\
 7.& $\frac{100}{7}$& $\frac{5}{7}$& $\frac{11}{7}$& $c_2=670+55b$,  $b\geq 0$& $\frac{268}{7}$& $\frac{19}7$&$\frac{11}{7}$\\
 \rowcolor{mColor2}8.& $\frac{108}{7}$& $\frac{6}{7}$& $\frac{11}{7}$& $c_1=195+351x$ ,  $b=853 + 2299 x$, $x \in [0,3]$& $\frac{276}{7}$& $\frac67$&$\frac{25}{7}$\\
 \rowcolor{mColor2}& & & & $c_2=828+39b$,  $b\geq 0$& $\frac{276}{7}$& $\frac{20}7$&$\frac{11}{7}$\\
 9.& $\frac{116}{7}$& $\frac{8}{7}$& $\frac{10}{7}$& $c_2=1189+725x$, $b=4+44x$& $\frac{284}{7}$& $\frac{22}7$&$\frac{10}{7}$\\
 \rowcolor{mColor2}10.& $\frac{124}{7}$& $\frac{9}{7}$& $\frac{10}{7}$& $c_1=868+2108x$,  $b=41+475x$& $\frac{292}{7}$& $\frac{23}7$&$\frac{10}{7}$\\
11.& $\frac{156}{7}$& $\frac{11}{7}$& $\frac{12}{7}$&  $c_2=11310+130b$, $b\geq 0$&  $\frac{324}{7}$&$\frac{25}7$& $\frac{12}7$\\
\rowcolor{mColor2}12.& $\frac{164}{7}$& $\frac{11}{7}$& $\frac{13}{7}$&  $c_1=11070+1107x$, $b=4+17x$&  $\frac{332}{7}$&$\frac{11}{7}$& $\frac{27}{7}$\\\rowcolor{mColor2}& \multicolumn{3}{c|}{$x\in\mathbb{Z}_{>0}$} &  $c_1=30135+4797x$, $b=1+11x$&  $\frac{332}{7}$&$\frac{25}{7}$& $\frac{13}{7}$\\
 \bottomrule

\caption{$(3,6)$ admissible solutions of type $W_{1,i}$ using $(3,0)$ solutions of type $LY_2$.}
\label{tabWLY2}
\end{longtable}
\begin{longtable}{c|ccc||l||lcc}\toprule

 & \multicolumn{3}{c||}{(3,0) Sol.}&~~~~~$c_i,\ b$ & \multicolumn{3}{c}{$(3, 6)$ Adm.}\\
\cmidrule(lr){2-4}  
\cmidrule(lr){6-8}          
\multirow{-2}{5pt}{S. No.}  & $c$ & $h_1$& $h_2$&  &  $c$ &$h_1$&$h_2$\\\midrule
1.& $\frac{4}{5}$& $\frac{1}{5}$& $\frac{2}{5}$&  $c_1=1,\ b=382$&  $\frac{124}{5}$&$\frac15$& $\frac{12}{5}$\\
 & & & & $4\leq c_2\leq22,\ b=3(19c_2-62)$&  $\frac{124}{5}$&$\frac{11}{5}$&$\frac{2}{5}$\\
\rowcolor{mColor2}2.& $\frac{12}{5}$& $\frac{1}{5}$& $\frac{3}{5}$&  $c_1=3,4$, $b=3(153c_1-385)$&  $\frac{132}{5}$&$\frac15$& $\frac{13}{5}$\\
\rowcolor{mColor2} & & & & $c_2=4+3x$,$b=8+50x$,  $x \in [0,30]$& $\frac{132}{5}$& $\frac{11}{5}$&$\frac{3}{5}$\\
 3.& $\frac{28}{5}$& $\frac{2}{5}$& $\frac{4}{5}$& $c_2=18+7x$, $b=14+26x$, $x \in [0,121]$& $\frac{148}{5}$& $\frac{12}{5}$&$\frac{4}{5}$\\
 \rowcolor{mColor2}4.& $\frac{36}{5}$& $\frac{3}{5}$& $\frac{4}{5}$& $c_1=20+9x$, $b=36+154x$, $x \in [0,14]$& $\frac{156}{5}$& $\frac{3}{5}$&$\frac{14}{5}$\\
 5.& $\frac{44}{5}$& $\frac{2}{5}$& $\frac{6}{5}$& $c_1=13,24$, $b=378,1480$& $\frac{164}{5}$& $\frac{2}{5}$&$\frac{16}{5}$\\
 & \multicolumn{3}{c|}{$x\in\mathbb{Z}_{>0}$} & $c_2=11x^*$, $b=10x$& $\frac{164}{5}$& $\frac{12}{5}$&$\frac{6}{5}$\\
 \rowcolor{mColor2}6.& $\frac{52}{5}$& $\frac{3}{5}$& $\frac{6}{5}$& $c_1=26+26x$, $b=109 + 625 x$, $x \in [0,4]$& $\frac{172}{5}$& $\frac{3}{5}$&$\frac{16}{5}$\\
 \rowcolor{mColor2} & \multicolumn{3}{c|}{$x\in\mathbb{Z}_{>0}$} & $c_2=91+26x$, $b=3+7x$& $\frac{172}{5}$& $\frac{13}{5}$&$\frac{6}{5}$\\
 7.& $\frac{68}{5}$& $\frac{4}{5}$& $\frac{7}{5}$& $c_1=102+68x$, $b=49+299x$,  $x \in [0,18]$& $\frac{188}{5}$& $\frac{4}{5}$&$\frac{17}{5}$\\
  & \multicolumn{3}{c|}{$x\in\mathbb{Z}_{>0}$} & $c_2=391+119x$, $b=6+10x$& $\frac{188}{5}$& $\frac{14}{5}$&$\frac{7}{5}$\\
 \rowcolor{mColor2}8.& $\frac{76}{5}$& $\frac{4}{5}$& $\frac{8}{5}$& $c_1=171,342$, $b=2010,4510$& $\frac{196}{5}$& $\frac{4}{5}$&$\frac{18}{5}$\\
 \rowcolor{mColor2}& & & & $c_2=931+57b$, $b \geq0$& $\frac{196}{5}$& $\frac{14}{5}$&$\frac{8}{5}$\\
 9.& $\frac{84}{5}$& $\frac{6}{5}$& $\frac{7}{5}$& $c_1=253+33x$,  $b=16+25x$& $\frac{204}{5}$& $\frac{6}{5}$&$\frac{17}{5}$\\
 & \multicolumn{3}{c|}{$x\in\mathbb{Z}_{>0}$} & $c_2=1508+154x$,  $b=6+9x$& $\frac{204}{5}$& $\frac{16}{5}$&$\frac{7}{5}$\\
 \rowcolor{mColor2}10.& $\frac{92}{5}$& $\frac{6}{5}$& $\frac{8}{5}$& $c_1=1127+299x$,  $b=28+68x$& $\frac{212}{5}$& $\frac{6}{5}$&$\frac{18}{5}$\\
 \rowcolor{mColor2}& \multicolumn{3}{c|}{$x\in\mathbb{Z}_{>0}$} &  $c_2=3289 + 299x$,  $b=2+5x$&  $\frac{212}{5}$&$\frac{16}{5}$& $\frac{8}{5}$\\11.& $\frac{108}{5}$& $\frac{7}{5}$& $\frac{9}{5}$&  $c_1=3944+833x$,  $b=32+50x$&  $\frac{228}{5}$&$\frac{7}{5}$& $\frac{19}{5}$\\
& & & &  $c_2=25194+459b$,  $b \geq 0$&  $\frac{228}{5}$&$\frac{17}{5}$& $\frac{8}{5}$\\\rowcolor{mColor2}12.& $\frac{116}{5}$& $\frac{8}{5}$& $\frac{9}{5}$&  $c_1=14877+1102x$ , $b=1+11x$&  $\frac{236}{5}$&$\frac{8}{5}$& $\frac{19}{5}$\\
 \rowcolor{mColor2}& \multicolumn{3}{c|}{$x\in\mathbb{Z}_{>0}$} & $c_2=14326 + 1653 x$, $b=8+10x$& $\frac{236}{5}$& $\frac{18}{5}$&$\frac{9}{5}$\\
\bottomrule
\caption{$(3,6)$ admissible solutions of type $W_{1,i}$ using $(3,0)$ solutions of type $(LY_1)^{\otimes2}$}
\label{tabW_LY1s}
\end{longtable}

\begin{longtable}{c|ccc||ll||lcc}\toprule

 & \multicolumn{3}{c||}{(3,0) Sol.}&~~~~~$c_i$&~~~~~$b$ & \multicolumn{3}{c}{$(3, 6)$ Adml.}\\
\cmidrule(lr){2-4}  
\cmidrule(lr){7-9}          
\multirow{-2}{5pt}{S. No.}  & $c$ & $h_1$& $h_2$&  & &  $c$ &$h_1$&$h_2$\\\midrule
1.& $4$& $\frac{1}{3}$& $\frac{2}{3}$&  $3\leq c_1 \leq 7$&$b=2(135c_1-322)$&  $28$&$\frac{1}{3}$& $\frac{8}{3}$\\
 & \multicolumn{3}{c|}{$x\in\mathbb{Z}_{>0}$} & $5\leq c_1 \leq 112$&$b=(18c_2-77)$&  $28$&$\frac{7}{3}$&$\frac{2}{3}$\\
\rowcolor{mColor2}2.& $12$& $\frac{2}{3}$& $\frac{4}{3}$&  $21\leq c_1 < 162^*$&$b=24(c_1-21)$&  $36$&$\frac{2}{3}$& $\frac{10}{3}$\\
 \rowcolor{mColor2} & \multicolumn{3}{c|}{$x\in\mathbb{Z}_{>0}$} & $c_2=81+9x$& $b=2x$& $36$& $\frac{8}{3}$&$\frac{4}{3}$\\
3.& $20$& $\frac{4}{3}$& $\frac{5}{3}$& $c_2=4158 + 135 x$&$b=2x$, $x\in \mathbb{Z}_{>0}$& $44$& $\frac{10}{3}$&$\frac{5}{3}$\\ \midrule \midrule
 \rowcolor{mColor2}1.& $4$& $\frac{2}{5}$& $\frac{3}{5}$& $c_2=10+5x$& $b=106+119x$, $x\in[0,13]$& $28$& $\frac{12}{5}$&$\frac{3}{5}$\\
 2.& $20$& $\frac{7}{5}$& $\frac{8}{5}$& $c_2=625 (7 + 4 x)$& $b=49x+6$. $x\in \mathbb{Z}_{>0}$ & $44$& $\frac{17}{5}$&$\frac{8}{5}$\\
 \bottomrule

\caption{$(3,6)$ admissible sols. of type $W_{1,i}$ using $(3,0)$ solutions of type $A_{2,1}^{\otimes2}$ and $A_{4,1}$.}
\label{tabW_As}
\end{longtable}

\subsection{$(3,6)$ admissible solutions of type $V$}

\begin{longtable}{c|ccc||>{\raggedright\arraybackslash}p{0.55\linewidth}||ccc}\toprule

 & \multicolumn{3}{c||}{(3,0) Sol.}&~~~~~~~~~~ $b_1,\ b_2,\ s_i $& \multicolumn{3}{c}{$(3, 6)$ Adm.}\\
\cmidrule(lr){2-4} 
\cmidrule(lr){6-8}          
\multirow{-2}{5pt}{S. No.}  & $c$ & $h_1$& $h_2$&  &  $c$ &$h_1$&$h_2$\\\midrule
1.& $\frac{1}{2}$& $\frac12$& $\frac{1}{16}$&  $b_1 \geq 0,\ b_2=285956+1176 b_1,\ s_1=11446+49 b_1$&  $\frac{97}{2}$&$\frac32$& $\frac{65}{16}$\\
 & & & & $b_1=64756 + 102400 x$, $b_2= 63943315 + 96538624 x$, $s_2= 1001 (76 + 119 x)$&  $\frac{97}{2}$&$\frac92$&$\frac{17}{16}$\\
\rowcolor{mColor2}2.& $\frac{3}{2}$& $\frac{1}{2}$& $\frac{3}{16}$&  $b_1\geq 0,\ b_2=363627+1275 b_1,\ s_1=13497+51 b_1$&  $\frac{99}{2}$&$\frac32$& $\frac{67}{16}$\\
 \rowcolor{mColor2}& & & & $b_1=24087 + 26624 x$, $  b_2 = 26433783 + 27992064 x$,  $s_2=629 (123+133 x)$& $\frac{99}{2}$& $\frac92$&$\frac{19}{16}$\\
 3.& $\frac{5}{2}$& $\frac12$& $\frac{5}{16}$& $b_1\geq 0$, $b_2=455005+1378 b_1$, $s_1=15756+53b_1$& $\frac{101}{2}$& $\frac32$&$\frac{69}{16}$\\
 & & & & $b_1=53342 + 56320 x$, $b_2= 64559448 + 67351552 x$, $s_2= 1961 (234 + 245 x)$& $\frac{101}{2}$& $\frac92$&$\frac{21}{16}$\\
 \rowcolor{mColor2}4.& $\frac{7}{2}$& $\frac{1}{2}$& $\frac{7}{16}$& $b_1\geq 0$, $b_2=561350+1485 b_1$, $s_1=18231+55b_1$& $\frac{103}{2}$& $\frac32$&$\frac{71}{16}$\\
 5.& $\frac{9}{2}$& $\frac{1}{2}$& $\frac{9}{16}$& $b_1\geq 0$, $b_2=683970+1596 b_1$, $s_1=20930+57b_1$& $\frac{105}{2}$& $\frac32$&$\frac{73}{16}$\\
 & & & & $b_1=8 (139 + 224 x)$, $b_2= 2523377 + 3078912 x$,  $s_2= 24149 (4 + 5 x)$& $\frac{105}{2}$& $\frac92$&$\frac{25}{16}$\\
 \rowcolor{mColor2}6.& $\frac{11}{2}$& $\frac{1}{2}$& $\frac{11}{16}$& $b_1\geq 0$, $b_2=824221+1711 b_1$, $s_1=23861+59b_1$& $\frac{107}{2}$& $\frac32$&$\frac{75}{16}$\\
 7.& $\frac{13}{2}$& $\frac{1}{2}$& $\frac{13}{16}$& $b_1\geq 0$, $b_2=983507+1830 b_1$, $s_1=27032+61b_1$& $\frac{109}{2}$& $\frac32$&$\frac{77}{16}$\\
 \rowcolor{mColor2}8.& $\frac{15}{2}$& $\frac{1}{2}$& $\frac{15}{16}$& $b_1\geq 0$, $b_2=1163280+1953b_1$, $s_1=30451+63b_1$& $\frac{111}{2}$& $\frac32$&$\frac{79}{16}$\\
 \rowcolor{mColor2}& & & & $b_1=29 + 160 x$,  $b_2= 36 (64543 + 45136 x)$, $s_2= 705 (1546 + 1085 x)$& $\frac{111}{2}$& $\frac92$&$\frac{31}{16}$\\
 9.& $\frac{17}{2}$& $\frac{1}{2}$& $\frac{17}{16}$& $b_1\geq 0$, $b_2=1365040+2080b_1$, $s_1=34126+65b_1$& $\frac{113}{2}$& $\frac32$&$\frac{81}{16}$\\
 \rowcolor{mColor2}10.& $\frac{19}{2}$& $\frac{1}{2}$& $\frac{19}{16}$& $b_1\geq 0$, $b_2=1590335+2211b_1$, $s_1=38065+67b_1$& $\frac{115}{2}$& $\frac32$&$\frac{83}{16}$\\

11.& $\frac{21}{2}$& $\frac{1}{2}$& $\frac{21}{16}$&  $b_1\geq 0$, $b_2=1840761+2346b_1$, $s_1=42276+69b_1$&  $\frac{117}{2}$&$\frac32$& $\frac{85}{16}$\\
\rowcolor{mColor2}12.& $\frac{23}{2}$& $\frac{1}{2}$& $\frac{23}{16}$&  $b_1\geq 0$, $b_2=2117962+2485b_1$, $s_1=46767+71b_1$&  $\frac{119}{2}$&$\frac32$& $\frac{87}{16}$\\
\midrule
13.& $\frac{31}{2}$& $\frac32$& $\frac{15}{16}$& $b_1=4716 + 8704 x$, $b_2= 92434598 + 156088832 x$, $s_2= 761165 (55 + 93 x)$& $\frac{127}{2}$& $\frac{11}2$&$\frac{31}{16}$\\
 \bottomrule

\caption{$(3,6)$ admissible solutions of type $V_{1,i}$ using $(3,0)$ solutions of type $B_{r,1}$.}
\label{tabV_Ising}
\end{longtable}

\begin{longtable}{c|ccc||l||ccc}\toprule

 & \multicolumn{3}{c||}{(3,0) Sol.}&~~~~~~~~~~ $\hspace{2 cm}b_1,\ b_2,\ s_i$& \multicolumn{3}{c}{$(3, 6)$ Adm.}\\
\cmidrule(lr){2-4} 
\cmidrule(lr){6-8}          
\multirow{-2}{5pt}{S. No.}  & $c$ & $h_1$& $h_2$&  &  $c$ &$h_1$&$h_2$\\\midrule
1.& $1$& $\frac12$& $\frac{1}{8}$&   $b_1 \geq 0,\ b_2=323155+1225 b_1,\ s_1=12446+50b_1$&  $49$&$\frac32$& $\frac{33}{8}$\\
 & & & & $b_1=609+3584 x$,  $b_2=4 (539131 + 890240 x),\ s_2=1615 (1 + 3 x)$&  $49$&$\frac92$&$\frac{9}{8}$\\
 \rowcolor{mColor2}2.& $2$& $\frac12$& $\frac{1}{4}$&$b_1 \geq 0,\ b_2=407525+1326 b_1,\ s_1=14600+52b_1$&  $50$&$\frac32$& $\frac{17}{4}$\\
 \rowcolor{mColor2}& & & & $b_1=22+32x$,  $b_2=917721 + 35776x,\ s_2=117 (16 + x)$&  $50$&$\frac92$&$\frac{5}{4}$\\
3.& $3$& $\frac12$& $\frac{3}{8}$&  $b_1 \geq 0,\ b_2=506226+1431 b_1,\ s_1=16966+54b_1$&  $51$&$\frac32$& $\frac{35}{8}$\\
 & &&& $b_1=751+3200x$,  $b_2=27 (61505 + 152704 x),\ s_2=2907 (4 + 11 x)$& $51$& $\frac92$&$\frac{11}{8}$\\
 \rowcolor{mColor2}4.& $5$& $\frac12$& $\frac{5}{8}$& $b_1 \geq 0,\ b_2=751805+1653 b_1,\ s_1=22366+58b_1$& $53$& $\frac32$&$\frac{37}{8}$\\
 \rowcolor{mColor2}& &&& $b_1=141+160x$,  $b_2=898974 + 312736x,\ s_2=29 (1244 + 455 x)$& $53$& $\frac92$&$\frac{13}{8}$\\
 5.& $6$& $\frac12$& $\frac{3}{4}$&$b_1 \geq 0,\ b_2=901395+1770 b_1,\ s_1=25416+60b_1$&  $54$&$\frac32$& $\frac{19}{4}$\\
 & &&& $b_1=2x$,  $b_2=724275 + 5556 x,\ s_2=68013 + 539 x$& $54$& $\frac92$&$\frac{7}{4}$\\
 \rowcolor{mColor2}6.& $7$& $\frac12$& $\frac{7}{8}$& $b_1 \geq 0,\ b_2=1070740+1891 b_1,\ s_1=28710+62b_1$& $55$& $\frac32$&$\frac{39}{8}$\\
 7.& $9$& $\frac12$& $\frac{9}{8}$& $b_1 \geq 0,\ b_2=1474647+2145 b_1,\ s_1=36062+66b_1$& $57$& $\frac32$&$\frac{41}{8}$\\
  \rowcolor{mColor2}8.& $10$& $\frac12$& $\frac{5}{4}$& $b_1 \geq 0,\ b_2=1712305+2278 b_1,\ s_1=40136+68b_1$& $58$& $\frac32$&$\frac{21}{4}$\\
 9.& $11$& $\frac12$& $\frac{11}{8}$& $b_1 \geq 0,\ b_2=1975910+2415 b_1,\ s_1=44486+70b_1$& $59$& $\frac32$&$\frac{43}{8}$\\
 \midrule
 \rowcolor{mColor2}10.& $13$& $\frac32$& $\frac{5}{8}$& $b_1=1137 + 2816x$,  $b_2=6039998 + 9273088 x,\ s_2=24679 (8 + 13 x)$& $61$& $\frac{11}2$&$\frac{13}{8}$\\
 11.& $14$& $\frac32$& $\frac{3}{4}$& $b_1=106+280x$, $b_2=3213267 + 1333040 x$, $s_2= 2299 (110 + 49 x) $& $62$& $\frac{11}2$&$\frac{7}{4}$\\
\bottomrule
\caption{$(3,6)$ admissible solutions of type $V_{1,i}$ using $(3,0)$ solutions of type  $D_{r,1}$.}
\label{tabV_U12}
\end{longtable}

\begin{longtable}{c|ccc||l||ccc}\toprule

 & \multicolumn{3}{c|}{(3,0) Sol.}&    $\hspace{2 cm}b_1,\ b_2,\ s_i$& \multicolumn{3}{c}{$(3, 6)$ Adm.}\\
\cmidrule(lr){2-4} 
\cmidrule(lr){6-8}          
\multirow{-2}{5pt}{S. No.}  & $c$ & $h_1$& $h_2$&    &  $c$ &$h_1$&$h_2$\\\midrule
1.& $\frac{4}{7}$& $\frac{1}{7}$& $\frac{3}{7}$&  $b_1=8500 + 14467 x$, $b_2=9481960 + 14051712 x$, $s_1=2494 (5 + 8 x)$&  $\frac{340}{7}$&$\frac87$& $\frac{31}{7}$\\
\rowcolor{mColor2}2.& $\frac{12}{7}$& $\frac{2}{7}$& $\frac{3}{7}$&  $b_1=23+209 x$, $b_2=36 (20839 + 6479 x)$, $s_1=124 (20 + 9 x)$&  $\frac{348}{7}$&$\frac97$& $\frac{31}{7}$\\
 3.& $\frac{44}{7}$& $\frac{4}{7}$& $\frac{5}{7}$& $b_1=532 + 551 x$, $b_2=1897442 + 1088196x$, $s_1=530 (236 + 143 x)$& $\frac{380}{7}$& $\frac{11}7$&$\frac{33}{7}$\\
 & & & & $b_1=50 + 207 x$, $b_2=890018 + 533922x$, $s_1=8957 (8 + 5 x)$& $\frac{380}{7}$& $\frac{32}7$&$\frac{19}{7}$\\
 \rowcolor{mColor2}4.& $\frac{52}{7}$& $\frac{4}{7}$& $\frac{6}{7}$& $b_1=97+527 x$, $b_2=194 (6395 + 5797 x)$, $s_1=492 (97 + 99 x)$& $\frac{388}{7}$& $\frac{11}7$&$\frac{34}{7}$\\
 5.& $\frac{60}{7}$& $\frac{3}{7}$& $\frac{8}{7}$& $b_1=484+15730 x$, $b_2=2517810 + 30403230 x$, $s_2=15523 (2 + 29 x)$& $\frac{396}{7}$& $\frac{10}7$&$\frac{36}7$\\
\bottomrule
\caption{$(3,6)$ admissible solutions of type $V_{1,i}$ using $(3,0)$ solutions of type $LY_2$ and $x\in \mathbb{Z}_{>0}$.}
\label{tabV_LY2}
\end{longtable}

\begin{longtable}{c|ccc||l||ccc}\toprule

 & \multicolumn{3}{c||}{(3,0) Sol.}&~~~~~~~~~~    $b_1,\ b_2,\  s_i$& \multicolumn{3}{c}{$(3, 6)$ Adm.}\\
\cmidrule(lr){2-4} 
\cmidrule(lr){6-8}          
\multirow{-2}{5pt}{S. No.}  & $c$ & $h_1$& $h_2$&  &  $c$ &$h_1$&$h_2$\\\midrule
1.& $\frac{4}{5}$& $\frac{1}{5}$& $\frac{2}{5}$&  $b_1=2012 + 2500 x$, $b_2=2917568 + 2518750 x$, $s_1=3751 (1 + x)$&  $\frac{244}{5}$&$\frac{6}5$& $\frac{22}{5}$\\
 & & & & $b_1=15+19x$, $b_2=417449 + 21204 x$, $s_2=7409 + 434 x$&  $\frac{244}{5}$&$\frac{21}5$&$\frac{7}{5}$\\
\rowcolor{mColor2}2.& $\frac{28}{5}$& $\frac{2}{5}$& $\frac{4}{5}$&  $b_1=3788+4375 x$, $b_2=6973730 + 6845000 x$, $s_1=1887 (48 + 49 x)$&  $\frac{268}{5}$&$\frac{7}{5}$& $\frac{24}{5}$\\
\rowcolor{mColor2} & & & & $b_1=3+13x$, $b_2=706078 + 41366 x$, $s_2=222 (857 + 51 x)$& $\frac{268}{5}$& $\frac{22}{5}$&$\frac{9}{5}$\\
 3.& $\frac{36}{5}$& $\frac{3}{5}$& $\frac{4}{5}$& $b_1=6+49x$, $b_2=975090 + 107562 x$, $s_1=57681 + 7163 x$& $\frac{276}{5}$& $\frac{8}{5}$&$\frac{24}{5}$\\
 & & & & $b_1=4+50x$, $b_2=975986 + 180600 x$, $s_2=551 (142 + 27 x)$& $\frac{276}{5}$& $\frac{23}{5}$&$\frac{9}{5}$\\
 \rowcolor{mColor2}4.& $\frac{44}{5}$& $\frac{2}{5}$& $\frac{6}{5}$& $b_1=796+2204 x$, $b_2=3264955 + 4201926 x$, $s_2=1271 (32 + 49 x)$& $\frac{284}{5}$& $\frac{7}{5}$&$\frac{26}{5}$\\
 \midrule
 5.& $\frac{68}{5}$& $\frac{4}{5}$& $\frac{7}{5}$& $b_1=139 + 391 x$, $b_2=3543566 + 2183620 x$, $s_1=10434 (79 + 51 x)$& $\frac{308}{5}$& $\frac{9}{5}$&$\frac{27}{5}$\\
 \rowcolor{mColor2}6.& $\frac{76}{5}$& $\frac{4}{5}$& $\frac{8}{5}$& $b_1=2390+3750 x$, $b_2=18130982 + 23030000 x$, $s_1=188993 (7 + 9 x)$& $\frac{316}{5}$& $\frac{9}{5}$&$\frac{28}{5}$\\
 \bottomrule
\caption{$(3,6)$ admissible solutions of type $V_{1,i}$ using $(3,0)$ solutions of type $(LY_1)^{\otimes2}$ and $x\in \mathbb{Z}_{>0}$.}
\label{tabV_LY1s}
\end{longtable}

\clearpage
\begin{longtable}{c|ccc||l||ccc}\toprule

 & \multicolumn{3}{c||}{(3,0) Sol.}&~~~~~~~~~~~$\hspace{2 cm}b_1,\ b_2,\ s_i$& \multicolumn{3}{c}{$(3, 6)$ Adml.}\\
\cmidrule(lr){2-4}  
\cmidrule(lr){6-8}          
\multirow{-2}{5pt}{S. No.}  & $c$ & $h_1$& $h_2$& &  $c$ &$h_1$&$h_2$\\\midrule
1.& $4$& $\frac{1}{3}$& $\frac{2}{3}$&$b_1=6+20x$, $b_2=963517 + 26810 x$, $s_1=147 (25 + x)$&  $52$&$\frac{4}{3}$& $\frac{14}{3}$\\
 & & & &$b_1=2x$, $b_2=487942 + 3896 x$, $s_1=27144 + 225 x$&  $52$&$\frac{13}{3}$&$\frac{8}{3}$\\
\rowcolor{mColor2}2.& $12$& $\frac{2}{3}$& $\frac{4}{3}$&$b_1\geq 0$, $b_2=1971810+3384 b_1$, $s_1=83592+162b_1$&  $60$&$\frac{5}{3}$& $\frac{16}{3}$\\ \midrule\midrule
 1.& $4$& $\frac{2}{5}$& $\frac{3}{5}$& $b_1=12896+15548x$, $b_2=18921702 + 21867664 x$, $s_1=34375 (6 + 7 x)$& $52$& $\frac{7}{5}$&$\frac{23}{5}$\\
\bottomrule
\caption{
$(3,6)$ admissible solutions of type $V_{1,i}$ using $(3,0)$ solutions of type $A_{2,1}^{\otimes2}$ and $A_{4,1}$ and $x\in \mathbb{Z}_{>0}$.}\label{tabV_As}
\end{longtable}

\bibliographystyle{jhep}
\bibliography{master}  
\end{document}